\pgfplotsset{compat=newest}
\definecolor{red}{rgb}{0.7,0.15,0.15}
\definecolor{green}{rgb}{0,0.5,0}
\definecolor{blue}{rgb}{0,0,0.7}
\makeatletter \@addtoreset{equation}{section}
\newtheorem{theorem}{Theorem}
\newtheorem{theorem2}{Theorem}[section]
\newtheorem*{assumption_sig*}{Assumption $\Ac_\sigma$}
\newtheorem*{assumption_lamb*}{Assumption $\Ac^\lambda$}
\newtheorem*{assumption_max*}{Assumption $\Ac^\text{max}$}
\newtheorem*{assumption_fix*}{Assumption $\Ac^\text{fix}$}
\newtheorem{corollary}[theorem2]{Corollary}
\newtheorem{lemma}[theorem2]{Lemma}
\newtheorem{proposition}[theorem2]{Proposition}
\newtheorem{definition}[theorem2]{Definition}
\newtheorem{remark}[theorem2]{Remark}
\newcommand{\comment}[1]{}
\def\Ac{{\cal A}}
\title{Optimal Rebate Design: Incentives, Competition and Efficiency in Auction Markets}
\author{Thibaut {\sc Mastrolia}\footnote{UC Berkeley, Department of Industrial Engineering and Operations Research, mastrolia@berkeley.edu} ~and~ Tianrui {\sc Xu}\footnote{UC Berkeley, Department of Mathematics, tianrui.xu@berkeley.edu}}
\date{\today}
\begin{document}

\pagestyle{headings}	
\newpage
\setcounter{page}{1}
\renewcommand{\thepage}{\arabic{page}}

\captionsetup[figure]{labelfont={bf},labelformat={default},labelsep=period,name={Figure }}	\captionsetup[table]{labelfont={bf},labelformat={default},labelsep=period,name={Table }}
\setlength{\parskip}{0.5em}
	
\maketitle

\begin{abstract}
This study explores the design of an efficient rebate policy in auction markets, focusing on a continuous-time setting with competition among market participants. In this model, a stock exchange collects transaction fees from auction investors executing block trades to buy or sell a risky asset, then redistributes these fees as rebates to competing market makers submitting limit orders. Market makers influence both the price at which the asset trades and their arrival intensity in the auction. We frame this problem as a principal-multi-agent problem and provide necessary and sufficient conditions to characterize the Nash equilibrium among market makers. The exchange’s optimization problem is formulated as a high-dimensional Hamilton-Jacobi-Bellman equation with Poisson jump processes, which we solve using a verification result. To numerically compute the optimal rebate and transaction fee policies, we apply the Deep BSDE method introduced in \cite{henry2017deep,han2020convergence,dayanikli2025machine}. Our results show that optimal transaction fees and rebate structures improve market efficiency by narrowing the spread between the auction clearing price and the asset’s fundamental value, while ensuring a minimal gain for both market makers indexed to the asset’s price on a coexisting limit order book.

\textbf{Keywords: }{Microstructure, market design, auctions, Nash equilibrium, stochastic games, stochastic control, market making, HJB PDE.}

\end{abstract}

\section{Introduction}
\label{intro}


Auction markets have diverse applications, ranging from online advertising and stock trading to electricity markets (see \cite{milgrom2019auction} for an overview of these applications). The profound theoretical and practical contributions to auction theory by 2020 Nobel laureates Robert Wilson and Paul Milgrom \cite{milgrom2021auction} highlight the vital role of market design. In their foundational works, \cite{Wilson1977} and \cite{Milgrom1979} studied the conditions under which an auction market yields a final price that closely approximates the true value of the asset being auctioned. Later, they advised the U.S. Federal Communications Commission and contributed to the implementation of simultaneous multiple-round auctions for spectrum license allocation. This paper explores auction market design for stock exchanges, aiming to improve the efficiency of modern financial markets. \\

The central limit order book (CLOB) operates as a continuous double auction, where participants submit their desired prices and quantities to the exchange. The term ``double'' refers to the dual participation of buyers bidding to purchase shares and sellers offering to sell them. A trade occurs the moment a buyer's bid meets or exceeds a seller's ask. In contrast, a batch auction is a trading mechanism where orders are collected over a fixed time interval and executed simultaneously at a single clearing price, rather than being matched continuously in real time. This process aggregates all buy and sell orders submitted during the batch period, determining a uniform price that maximizes the total executed volume based on supply and demand. 

These two mechanisms underpin the primary trading systems in modern markets: the continuous limit order book and the periodic auction system. The CLOB functions in real time, continuously accepting new orders and matching them instantly when a suitable counterparty is available, either at a specified price (limit orders) or at the best available price (market orders). For empirical evidence, price formation models, and stochastic control problems related to limit order books and market making, we refer the reader to \cite{avellaneda2008high,gayduk2018endogenous,cont2010stochastic,gueant2013dealing,foucault1999order,colliard2012trading,baldacci2023mean,barucci2025market} and the monograph \cite{cartea2015algorithmic}.

By contrast, a periodic auction operates through a sequence of repeated batch auctions. When a trader submits an order, an auction phase is initiated and remains open for a fixed duration, ranging from a few milliseconds to several minutes depending on the exchange. During this period, market participants can submit, modify, or cancel orders. At the close of the phase, the exchange computes a clearing price that maximizes the executed volume. All executable orders are then filled at this single clearing price, regardless of their original submission times. A new auction begins once the previous one concludes. Unlike the continuous system, periodic auctions deliberately defer execution to the end of the interval. This approach has gained growing traction among traditional exchanges, with notable implementations including Cboe U.S. Equities Periodic Auctions, Turquoise (operated by the London Stock Exchange Group), and CME Globex.

\subsection{Periodic auctions versus continuous limit order books}

Although the CLOB is the prevailing trading mechanism globally, its continuous nature introduces significant structural challenges. The influential work of \cite{budish2015high} argues that continuous trading inherently creates transient arbitrage opportunities. These fleeting opportunities incentivize market participants to engage in a technological arms race, competing purely for speed to react faster to market updates and execute orders milliseconds ahead of rivals. Budish et al. further argue that this speed competition is economically wasteful, as it does not increase overall market efficiency but rather forces traders to heavily invest resources into latency reduction. Studies by \cite{wah2013latency} and \cite{fs2012} echo these concerns. For instance, \cite{wah2013latency} shows that high-frequency traders, acting as latency arbitrageurs, can widen bid-ask spreads and disrupt market price discovery.

These structural defects raise a fundamental question of whether alternative trading systems can mitigate these issues. \cite{budish2015high} strongly advocates for frequent batch auctions (conducted, for example, every tenth of a second) to eliminate the inherent flaws of the continuous LOB. Building on this, \cite{paul2021optimal} determined the optimal duration of such auctions using empirical data from Euronext, concluding that auction mechanisms often outperform limit order books. Furthermore, \cite{duffie2017size,derchu2024ahead} investigate hybrid markets where continuous trading sessions alternate with batch auctions. Collectively, these studies advocate for batch auctions as a robust solution to pool liquidity, neutralize the mechanical advantages of high-frequency trading, and ultimately enhance market efficiency.

\subsection{Sniping, transaction fees, and rebates: a motivation from DeFi mechanisms}

In this paper, we investigate the specific market design features required to make a periodic auction market efficient, well-operated, and beneficial for all participants. While periodic auctions successfully curb speed-based competition and aggregate liquidity for better price discovery \cite{paul2021optimal}, their baseline design is not without flaws. A toy two-player auction game studied in \cite{derchu2023equilibria} demonstrates that, under imperfect information and without specific incentives, a naive auction market can be highly inefficient, failing to generate any trade. However, \cite{derchu2023equilibria} proves that introducing transaction fees indexed on each player's half-spread stimulates transactions, and provides a quantitative framework for the optimal fees an exchange should set to attract liquidity. Therefore, rebate and fee policies are fundamental to generating trade in auction markets with imperfect information. 

Furthermore, \cite{mastrolia2024clearing} highlights a critical vulnerability in periodic auctions: unregulated strategic traders have no incentive to reveal their intentions early, preferring to submit orders at the very last moment. This ``sniping'' behavior distorts the auction, increasing the distance between the clearing price and the fundamental efficient price of the asset. To counteract this effect, exchanges can either randomize the clearing time or optimally calibrate fees to penalize snipers. Consequently, designing optimal financial incentives is essential to monitor and regulate the behavior of strategic traders. \textit{``The majority of auctions worldwide require cash bids. Yet in many auctions, bidders do not have cash equal to the sum they wish to bid. As a result, bidders finance part of their bids. This financing may come from the financial markets or from the seller.''} \cite{rhodes2005financing}. 
In continuous limit order book models, incentive structures and make-take fee policies have been thoroughly investigated (see \cite{euch2021optimal}). The core innovation of our work is to efficiently design rebate policies specifically tailored for periodic auction markets. These rebates are offered by the exchange to auction market makers, while standard transaction fees are collected from buyer and seller investors.\\

The design of optimal rebates in auctions is also highly relevant for decentralized finance, particularly for next-generation Decentralized Exchanges combatting ``Just-In-Time'' liquidity sniping. Many DEXs now utilize batch auctions to group user orders over short time intervals (e.g., 30 to 60 seconds), executing them simultaneously at a uniform clearing price. By leveraging custom liquidity pools (such as Uniswap v4's smart contract ``hooks'') and implementing dynamic fee structures, these platforms can effectively mitigate sniping.

\paragraph{Contributions and model relevancy.}
We aim to determine the optimal rebate and transaction fee scheme that an exchange should propose to auction participants to maximize both market efficiency and the exchange's own profitability. We consider an ``English auction'' type framework, characterized by open and observable bids over time. We emphasize that the exchange has a dual objective: improving the quality of the market while ensuring its own economic viability through collected fees. As a critical byproduct, this optimally designed scheme inherently mitigates last-minute sniping behavior. This dual mandate is crucial because, as highlighted in \cite{budish2024stock_exchange}, exchanges have little motivation to migrate trading activity from continuous markets to auction markets unless the new mechanism generates sufficient financial benefits for the venue itself. 

The realism of our mathematical model is grounded in several empirical economic features:
\begin{itemize}
    \item \textbf{Competition between strategic traders (market makers):} We consider strategic market makers who compete within a batch auction. They dynamically adjust their quotes and volumes as limit orders throughout the auction's duration, optimally controlling their trading speed. In return for providing liquidity, these traders receive rebates from the exchange.
    \item \textbf{Stochastic order flow from investors:} Buyer and seller investors submit market orders to trade a specific volume at the final clearing price determined by the exchange, regardless of what that price may be. These investors are not strategic but act spontaneously, paying a fixed transaction fee per share. Crucially, we incorporate the realistic feature that investors may cancel their orders at random times.
    \item \textbf{Heterogeneity of traders:} Market participants are not assumed to be symmetric. We account for varying trading volumes and different execution speeds across different traders.
    \item \textbf{Optimal rebate design and sniper mitigation:} This study provides a direct, quantitative mechanism to monitor and mitigate sniping activity by incentivizing traders to reveal their orders earlier in the auction process. The main advancement of this work over previous studies (e.g., \cite{mastrolia2024clearing}) is the rigorous derivation of an optimal, continuous-time incentive structure. Unlike earlier discrete-time approximations, this paper models time continuously and formulates the interaction as a Nash equilibrium game between strategic traders.
    \item \textbf{Deep learning methods for high-dimensional stochastic control:} Due to the mathematical complexity of the model and the high dimensionality of the resulting multi-player game, we employ a deep learning Backward Stochastic Differential Equation (Deep BSDE) method. This allows us to numerically solve the highly non-linear optimization problem associated with controlled multi-dimensional SDEs with jumps and controlled intensities.
\end{itemize}

\bigskip

To derive this optimal incentive structure, we adopt a principal-agent framework. The stock exchange acts as the principal, and the strategic traders act as the agents. The exchange proposes a set of financial incentives (rebates and fees), and the agents subsequently decide when and at what price to submit their orders, aiming to maximize their own expected utility under the proposed contract. The continuous-time principal-agent problem is deeply rooted in contract theory. The seminal work of \cite{holmstrom1987aggregation} pioneered the continuous-time approach, while \cite{sannikov2008continuous} introduced a tractable framework allowing a principal to offer continuous incentives over an infinite horizon. Building upon this foundation, \cite{cvitanic2018dynamic} provided a comprehensive mathematical framework for solving general continuous-time principal-agent problems, which serves as the theoretical backbone of our derivation.

This framework has been naturally extended to address various financial complexities, including volatility control \cite{cvitanic2018dynamic,chiusolo2024new}, multi-principal competition \cite{mastrolia2018principal,hu2023principal,euch2021optimal}, and multi-agent systems \cite{keun2008optimal,espinosa2015optimal}. In this paper, we focus on a multi-agent setup tailored for optimal market making. We draw upon the methodologies of \cite{elie2019contracting,hernandez2024principal,baldacci2023mean}, which elegantly resolve multi-agent contract problems by linking the principal's optimal incentive design to the existence of a Nash equilibrium among the competing agents. While optimal regulatory policies and make-take fees have been explored for continuous limit order books and dark pools \cite{euch2021optimal,baldacci2021optimal,larsson2023optimal}, our work specifically extends this literature into the realm of periodic batch auctions. We formalize a dynamic make-take fee policy operating within a single batch auction (for instance, a closing auction) where continuous quoting occurs prior to the clearing phase.

\bigskip

A known challenge of applying the continuous-time principal-agent framework to real-world market microstructure is that the optimal incentive scheme is implicitly characterized by a complex, high-dimensional system of partial differential equations. We employ a numerical approach based on the Deep BSDE method to solve for and visualize the optimal rebate structure. This method \cite{henry2017deep,han2020convergence,ji2020three,ji2022deep,dayanikli2025machine} uses deep neural networks to efficiently solve stochastic optimization problems that suffer from the curse of dimensionality. Recent advances in scientific machine learning, coupled with powerful open-source computational tools, have made it highly effective to tackle these previously intractable equations (see \cite{weinan2021algorithms,beck2020overview,raissi2024forward,han2017deep}). Furthermore, our approach aligns with recent successful applications of machine learning in solving stochastic control problems involving jump processes \cite{voskamp2023deep,lu2024multi,georgoulis2024deep}. Finally, this work directly addresses a major frontier in modern economic and financial mathematics, as emphasized by René Carmona. In \cite[Section 5.1]{carmona2022influence}, Carmona notes: \textit{``Given the ubiquitous nature of auctions in the financial industry, we find it rather disheartening that the mathematical finance community did not pay more attention to the large number of applications in search for a deeper theoretical understanding. Computing equilibria, especially Nash equilibria, is a challenging problem...''}. Our research embraces this challenge. By rigorously combining price discovery mechanisms, contract theory, and Nash equilibria in continuous time, this paper provides a quantitative and mathematical response to Carmona's challenge. 

\bigskip

The remainder of this paper is organized as follows. Section \ref{sec:model} introduces the mathematical model, outlining the characteristics of market participants and the rules governing the auction's clearing price. Section \ref{sec:principalagent} formulates the principal-agent framework. Within this framework, Section \ref{sec:agentprb} details the strategic agents' optimization problem, Section \ref{sec:nash} establishes the Nash equilibrium driving the competition among agents, and Section \ref{sec:principalprb} focuses on the exchange's problem of designing the optimal rebate and transaction fee policies. Finally, Section \ref{sec:neural} details the application of the Deep BSDE method to numerically compute and illustrate the optimal market design. The proofs of the technical lemmas are deferred to the appendix.

\section{The auction market model}
\label{sec:model}

\subsection{The auction market}
In this section, we define informally the auction market we are considering. The rigorous mathematical formulation is presented in the subsequent section. We consider a financial asset with an efficient price $P^*$ traded in a batch auction of fixed duration $T>0$. Within this auction, we model two types of participants: strategic market makers, who continuously control their bidding prices and arrival intensities throughout the auction, and institutional investors, who send market orders to buy or sell fixed volumes of the asset.\newline 

The market trades a risky asset whose efficient price $P^*$ is driven by the following diffusion: 
$$d P_t^* = \sigma dW_t,\; P_0^*>0, \quad t \in [0,T],$$ 
where $W$ is a standard Brownian motion and $\sigma>0$ is a fixed volatility constant. Let $P_T^{cl}$ denote the clearing price of the auction determined by the stock exchange at time $T$. The efficient price $P^*$ can be interpreted as the fair no-arbitrage value of the asset, or more realistically, the mid-price of a co-existing limit order book. It acts as the fundamental benchmark from which players aim to deviate in order to generate profit during the auction.
\vspace{0.5em}

\subsubsection{Strategic traders: market makers and supply schedules}
We focus on two strategic traders in this auction, denoted as market maker $p$ and market maker $q$. They submit orders according to right-continuous and adapted counting processes $N^p$ and $N^q$, with respective intensity processes $\lambda^p$ and $\lambda^q$. Here, $N^i_t$ represents the total number of orders submitted by trader $i \in \{p,q\}$ up to time $t$. For instance, a jump $\Delta N^p_t = N^p_t - N^p_{t^-} = 1$ indicates that market maker $p$ submits an order at exact time $t$. 

We assume that these strategic traders penalize deviations of their intensity process from a fixed target intensity $\lambda_0$. This reference level can be viewed as a benchmark trading strategy, analogous to an Almgren-Chriss execution schedule, which the trader aims to track.\footnote{Note that when the trader's risk aversion parameter is $0$, following this benchmark coincides with a ``minimum impact'' approach, commonly known as a TWAP strategy.} 

Furthermore, the market makers continuously update their bidding prices, denoted by $P^p_t$ and $P^q_t$. These prices are based on idiosyncratic signals and deviate from the fundamental efficient price $P^*$. Specifically, let $P^{*,p}$ and $P^{*,q}$ be two independent copies of $P^*$ (driven by independent Brownian motions $W^p$ and $W^q$). The price quoted by trader $i \in \{p,q\}$ evolves as:
$$dP^i_t = \mu^i_t dt + dP_t^{*,i},$$
where the drift $\mu^i_t$ represents the spread the trader actively controls to favor their position. \vspace{0.5em}

These participants are considered ``strategic'' because they jointly optimize their arrival intensity $\lambda^i$ and their spread $\mu^i$. We refer to them as market makers because their quoted prices $P^p$ and $P^q$ ultimately drive the liquidity and pricing dynamics of the auction.

\begin{remark}[Market fragmentation]
The assumption that traders observe independent, noisy copies of the efficient price $P^*$ is a standard feature in market microstructure modeling, \cite{hasbrouck1995one,ohara2015high} motivated by market fragmentation and information asymmetry. While there is a single theoretical fair value $P^*$, traders interact with the market through different technological interfaces, distinct venues, or via correlated financial instruments. Consequently, the price they act upon at time $t$ is a specific realization subject to local noise, latency, and microstructure frictions.
\end{remark}

The executed volume for a market maker's order depends on the final clearing price $P^{cl}_T$ and is modeled via a linear supply/demand schedule function as introduced in \cite{klemperer1989supply}. Specifically, the volume requested by market maker $i$ is given by the time-independent function $Q^i(x) = K^i(P^i_t - x)$, evaluated at the clearing price $x = P^{cl}_T$. If $P^{cl}_T \leq P^i_t$, we have $Q^i(P^{cl}_T) \geq 0$, meaning the order acts as a buy limit order since its limit price is above the clearing price. Conversely, if $P^{cl}_T > P^i_t$, the order is negative ($Q^i(P^{cl}_T) < 0$) and is executed as a sell limit order. 

At the end of the auction, the total aggregated quantity allocated to trader $i$ is the integral of these schedules over their order arrivals:
$$ \int_0^T K^i(P^i_{t} - P^{cl}_T)dN^i_t = \sum_{0 < t \leq T} K^i(P^i_{t} - P^{cl}_T)(N^i_t - N^i_{t^{-}}). $$

\begin{remark}[Demand/Supply curve submission]
Note that each strategic trader submits a linear supply/demand curve to the exchange, see \cite{klemperer1989supply}. this schedule is built by the exchange regarding the limit orders sent by the participants and will be evaluated at time $T$ with respect to the clearing price to determine the volume of orders exchanges. In other words, the volume $Q_i(P_T^{cl})$ is not submitted at time $t$. Rather, each strategic trader submits a price schedule seen as a demand curve whenever
the Poisson process $N^p$ or $N^q$ jumps. Rather than a standard Cournot competition where players compete strictly on fixed quantities, traders here simultaneously submit price-contingent quantity schedules to maximize their expected profits. 
\end{remark}

We denote by $\alpha^p = (\lambda^p,\mu^p)$ and $\alpha^q = (\lambda^q,\mu^q)$ the respective control processes of the two traders. Consequently, the full strategy profile for each trader is summarized by the tuples $(\alpha^p, N^{p}, P^{p}, P^{*,p}, W^{p})$ and $(\alpha^q, N^{q}, P^{q}, P^{*,q}, W^{q})$.
\subsubsection{Spontaneous investors and order flow}
In addition to strategic market makers, non-strategic buyers and sellers also participate in this auction. They arrive according to right-continuous and adapted counting processes, submitting market orders with fixed volumes to be executed at the end of the auction. Let $N_t^{a}$ denote the number of buyers (asking for liquidity) and $N_t^{b}$ the number of sellers (bidding for liquidity) that have arrived up to time $t$. We assume $N_t^{a}$ and $N_t^{b}$ are independent counting processes with respective intensity processes $\lambda^a$ and $\lambda^b$. To encompass the main empirical facts and the realism of auction markets, we enforce the following assumptions:

\begin{itemize}
\item \textit{Order cancellation.} Non-strategic participants may cancel their orders before the auction closes. Let $\theta: \mathbb{R}^+ \to \mathbb{R}^+$ be a non-increasing function with $\theta(0) \geq 1$, and let $(A_i)_{i \geq 1}, (B_i)_{i \geq 1}$ be independent and identically distributed uniform random variables on $[0,1]$. The $i$-th buyer, arriving at time $\tau_i^a$, ultimately submits an order of size $v_i^a \mathbf{1}_{\{A_i \leq \theta(T - \tau^a_i)\}}$. Symmetrically, the $i$-th seller, arriving at time $\tau_i^b$, submits an order of size $v_i^b \mathbf{1}_{\{B_i \leq \theta(T - \tau^b_i)\}}$. Here, the indicator function models the cancellation option (with orders arriving exactly at $T$ never being cancelled, since $\theta(0) \geq 1$), and the base volumes $v^a_i$ and $v^b_i$ are bounded by strictly positive constants $v^a$ and $v^b$, respectively. 

\item \textit{Market impact and order flow.} We assume that the pricing behavior of the strategic market makers influences the arrival of non-strategic investors. Specifically, higher spreads proposed by the market makers attract more sellers and fewer buyers to the market, whereas lower spreads attract more buyers and fewer sellers. 

\item \textit{Transaction fees.} Non-strategic traders incur a fixed transaction fee $d \geq 0$ per executed order at the end of the auction.
\end{itemize}

To rigorously characterize the market impact of the strategic traders and the deterrent effect of the fee $d$ on non-strategic traders, we parameterize the intensities as follows:
$$ \lambda^a_t = \lambda^a(\mu_t,d) := \lambda_0 i^a(\mu^p_t,\mu^q_t)\iota(d) \quad \text{and} \quad \lambda^b_t = \lambda^b(\mu_t,d) := \lambda_0 i^b(\mu^p_t,\mu^q_t)\iota(d), $$
for a baseline intensity $\lambda_0$, a monotonically non-increasing function $i^a$, a non-decreasing function $i^b$, and a non-increasing fee-impact function $\iota$, normalized such that $i^a(0,0) = i^b(0,0) = \iota(0) = 1$. For instance, the exponential specification see \cite[Section 2.5.]{avellaneda2008high}
$$ \lambda^a_t = \lambda_0 e^{-c_1\mu^p_t - c_2\mu^q_t} e^{-d} \quad \text{and} \quad \lambda^b_t = \lambda_0 e^{c_3\mu^p_t + c_4 \mu^q_t}e^{-d}, $$
for non-negative constants $c_1, c_2, c_3, c_4$, satisfies these structural requirements. \vspace{0.5em}
 
\subsubsection{Clearing Price Rule}

At the end of the auction $T>0$, the exchange determines the clearing price to maximize the executed volume. Given a candidate price $x > 0$, the net volume of limit orders received from the strategic market makers is
\[ LM_T(x) = \int_0^T K^p(P_s^p - x)dN_s^p + \int_0^T K^q(P_s^q - x)dN_s^q, \]
while the net volume of market orders submitted by non-strategic buyers and sellers is
\[ MO_T = \sum_{i=1}^{N_T^a}v_i^a \mathbf{1}_{\{A_i \leq \theta(T-\tau^a_i)\}} - \sum_{j=1}^{N_T^b}v_j^b \mathbf{1}_{\{B_j \leq \theta(T-\tau^b_j)\}}. \]

The clearing price, denoted by $P^{cl}_T$, is set by the stock exchange to maximize the total traded volume. Since the submitted volume of each strategic trader is a strictly decreasing linear function of the clearing price, maximizing the traded volume is equivalent to finding the equilibrium point where aggregate supply meets aggregate demand. Similarly to \cite{du2017optimal} and \cite[Section 2.3]{paul2021optimal}, the market clearing condition is given by $LM_T(P^{cl}_T) + MO_T = 0$, or equivalently:
\begin{align*}
    \sum_{i=1}^{N_T^a} v^a_i \mathbf{1}_{\{A_i \leq \theta(T - \tau^a_i)\}} - \sum_{j=1}^{N_T^b} v^b_j \mathbf{1}_{\{B_j \leq \theta(T - \tau^b_j)\}} &+ \int_0^T K^p(P^{p}_s - P^{cl}_T ) dN^{p}_s \\
    &+ \int_0^T K^q(P^{q}_s - P^{cl}_T ) dN^{q}_s = 0.
\end{align*}

Therefore, exploiting the linearity of the submitted supply and demand curves, we can solve explicitly for the clearing price:
\begin{equation}
    P^{cl}_T = \frac{\sum_{i=1}^{N_T^a} v^a_i \mathbf{1}_{\{A_i \leq \theta(T - \tau^a_i)\}} - \sum_{j=1}^{N_T^b} v^b_j \mathbf{1}_{\{B_j \leq \theta(T - \tau^b_j)\}} + \int_0^T K^p P^{p}_s dN^{p}_s + \int_0^T K^q P^{q}_s dN^{q}_s} {K^p N^{p}_T + K^q N^{q}_T}.
    \label{eq:pclearing}
\end{equation}

\begin{remark}[Existence of the clearing price and terminology Auction makers]\label{rem:existencecp}
Note that the clearing price in Equation \eqref{eq:pclearing} is well-defined almost surely, provided that $K^p N^{p}_T + K^q N^{q}_T > 0$, meaning at least one strategic order has been submitted during the auction. On the negligible event where no strategic orders are submitted, we assume the exchange defaults to a reference price, such as the efficient price $P^*_T$. This is why we will from now on call these strategic actors Auction Makers. We thus assume that the duration of the auction is long enough such that at least an Auction Maker enters in it. Alternatively, we could assume that there always exists at least on limit order in the auction. This is in accordance with the main assumption in \cite{paul2021optimal}.
\end{remark}

\subsubsection{Toy example and motivation}

We propose to consider a simple framework as a toy model to understand the motivation of each players. For the sake of simplicity, we are not considering auction market orders, that is $v^a=v^b=0$. Consider the following schedules sent by Market Maker $p$ with limit price $100$ and Market Maker $q$ with limit price $90$. 

$$\begin{array}{|c|c|}
\hline
\textbf{Volume $MM^p$} & \textbf{Proposed price $P^p$} \\
\hline
-400 & 96 \\
-300 & 97 \\
-200 & 98 \\
-100 & 99 \\
100 & 101 \\
200 & 102 \\
\hline
\end{array}\qquad\begin{array}{|c|c|}
\hline
\textbf{Volume $MM^q$} & \textbf{Proposed price $P^q$} \\
\hline
-400 & 86 \\
-300 & 87 \\
-200 & 88 \\
-100 & 89 \\
100 & 91 \\
200 & 92 \\
\hline
\end{array}$$

The exchange rebuilding the supply/demand schedule get:
\[Q^p(x)= 100(x-100),\; Q^q(x)=100(x-90).\]

We represent this situation graphically as follow (using the absolute value of $Q^p,Q^q$ for the sake of clarity): 
\begin{center}
\includegraphics[scale=0.25]{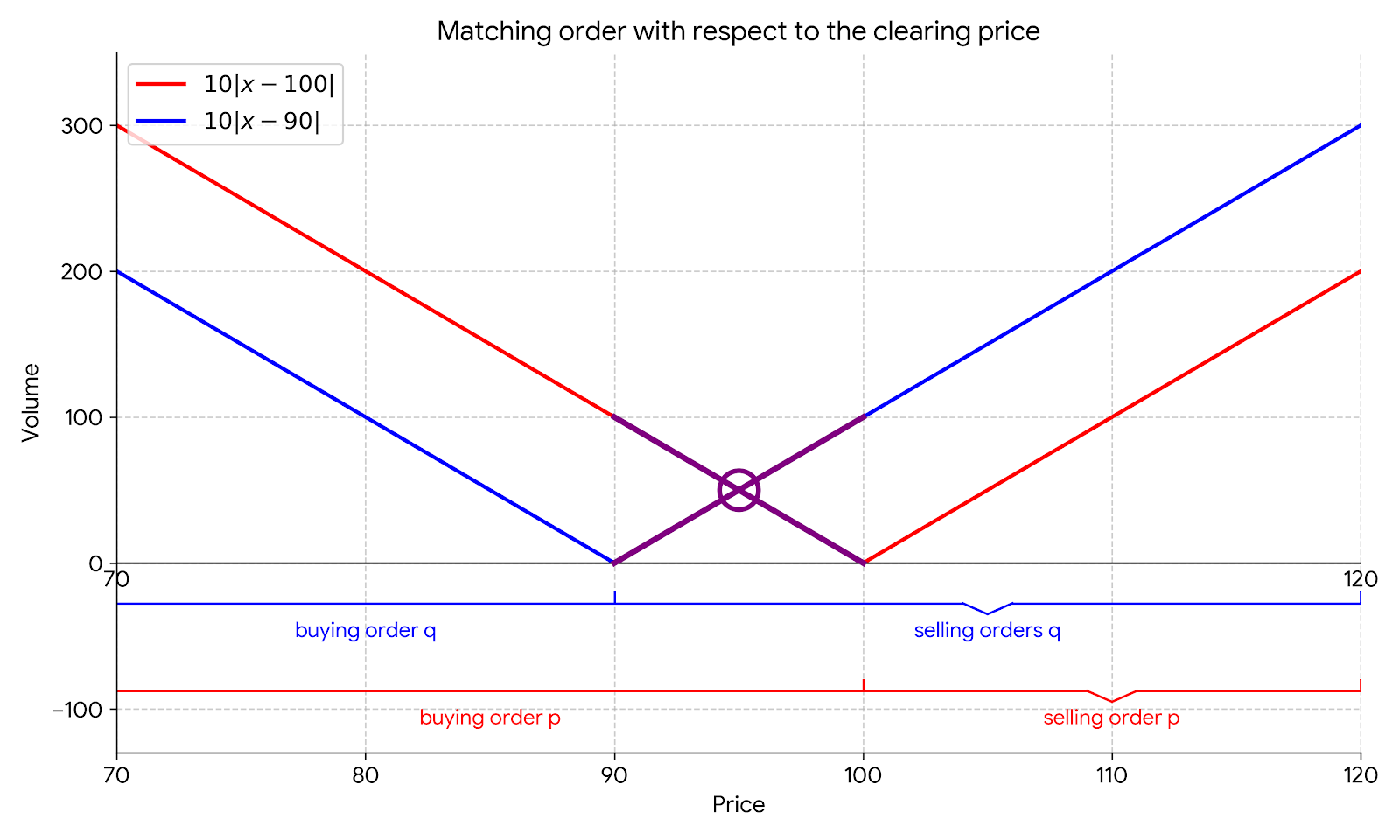}
\end{center}
The clearing price is thus set at $P^{cl}=95$ to ensure as much order executed as possible. Assume that the efficient price market value is $P^*=97.5$. Market Maker $p$ buy at $95$ instead of $97.5$ a quantity of $500$ making a net profit of $500\times 2.5=1250$. At the opposite, Market Maker $q$ sell at $95$ instead of $97.5$ the same quantity, making a loss of the same amount. This symmetric zero sum game makes Market Maker $q$ to deviate closer to the curve of Market Maker $p$. In this case, there is no clearing price as emphasized in \cite{derchu2023equilibria}. Introducing fees cured this flaw as again explained in a one period model \cite{derchu2023equilibria}.
\subsubsection{Rebates policy}

To incentivize liquidity provision, the exchange reallocates the aggregate transaction fees $d$ collected from non-strategic investors to the strategic market makers. These incentives take the form of terminal rebates, denoted by $\xi^p$ and $\xi^q$, which are settled at the clearing time $T$ based on the realized market activity during the auction. While the functional forms of these rebates (the payment rules) are announced by the exchange at time $t=0$, the actual payouts are $\mathcal{F}_T$-measurable random variables, indexed on the final market characteristics at time $T$.

\begin{remark}[Connections to DeFi and Uniswap v4]
The mechanism of dynamically reallocating collected fees to liquidity providers as a function of market activity strongly resonates with recent structural innovations in Decentralized Finance. In particular, this setup mimics the architecture of Automated Market Makers like Uniswap v4, where custom ``hooks'' enable dynamic fee tiers and the redistribution of internalized MEV (Maximal Extractable Value) or transaction fees back to liquidity providers at the end of a block or batch. Therefore, our theoretical framework extends naturally to the study of optimal liquidity provision in modern decentralized exchanges.
\end{remark}

\subsection{Mathematical framework}

We define the canonical space $\Omega = \mathcal{S} \times \mathcal{S} \times C([0,T],\mathbb{R}) \times \mathcal{S} \times C([0,T],\mathbb{R}) \times \mathcal{S} \times C([0,T],\mathbb{R})$, where $\mathcal{S}$ is the space of right-continuous functions with left limits (càdlàg) from $[0,T]$ into $\mathbb{N}$, and $C([0,T],\mathbb{R})$ is the space of continuous functions from $[0,T]$ into $\mathbb{R}$. Let $\mathcal{F}$ be the Borel $\sigma$-algebra of $\Omega$. The canonical process on $(\Omega,\mathcal{F})$ is denoted by 
$C_t = (C^1_t, C^2_t, C^3_t, C^4_t, C^5_t, C^6_t, C^7_t) := (N_t^a , N_t^b, W_t, N_t^{p}, W_t^p, N_t^{q}, W_t^q)$. 
Let the filtration $\mathbb{F} = (\mathcal{F}_t)_{t \in [0,T]}$ be the natural filtration generated by $(C_t)_{t \in [0,T]}$. 

We introduce a reference probability measure $\mathbb{P}^{0}$ under which $W_t$, $W_t^p$, and $W_t^q$ are independent standard Brownian motions, and under which $N_t^{a}$, $N_t^{b}$, $N_t^{p}$, and $N_t^{q}$ are independent Poisson processes with a constant deterministic intensity $\lambda_0>0$. 

\begin{definition}[Auction market]
The space $(\Omega,\mathcal{F},\mathbb{P}^0)$ is called an auction market where:
\begin{itemize}
    \item $N^a$ and $N^b$ model the arrivals of buy and sell orders from non-strategic investors, arriving at a rate $\lambda_0$ under the reference measure $\mathbb{P}^0$;
    \item $\sigma W$ is the random process driving the efficient price of the traded asset, for some volatility constant $\sigma>0$;
    \item $N^p$ and $N^q$ denote the respective order flows submitted by market maker $p$ and market maker $q$, arriving at a rate $\lambda_0$ under $\mathbb{P}^0$, while $\sigma W^p$ and $\sigma W^q$ represent the idiosyncratic noise in their respective bidding prices. 
\end{itemize}
\end{definition}

In order to construct the controlled market probability where intensities and drifts are actively chosen by the participants, we rely on the following technical result, adapted from \cite[Theorems 5.3 and 5.4]{bjork2021point}. The proof is postponed to Appendix \ref{app:girsanov} 

\begin{lemma}[Change of probability measure and Girsanov's Theorem]\label{thm:girsanoc}
Fix a filtered probability space $(\Omega, \mathcal{F}, \mathbb{P}, \mathbb{F})$. Let $N^1, \dots, N^k$ be a sequence of mutually independent optional counting processes with respective intensities $\lambda^1, \dots, \lambda^k$, which are non-negative optional processes. Let $W^1, \dots, W^\ell$ be mutually independent standard Brownian motions, and let $\phi^1, \dots, \phi^k$ be predictable processes such that $\phi_t^i \geq -1$ for any $i = 1, \dots, k$, $\mathbb{P}$-a.s. Furthermore, let $\delta^1, \dots, \delta^\ell$ be $\mathbb{F}$-adapted processes. 
Assume that the following Novikov-type condition holds:
\begin{equation}\label{novikov}
    \mathbb{E}^\mathbb{P}\left[\exp\left(\frac{1}{2} \int_0^T \sum_{i=1}^\ell |\delta^i_t|^2 dt + \sum_{j=1}^k \int_0^T \Big((1+\phi_t^j)\ln(1+\phi_t^j) - \phi_t^j \Big)\lambda^j_t dt\right) \right] < \infty.
\end{equation}

Then:
\begin{itemize}
    \item[i.] There exists a unique strictly positive martingale solution $L$ to the SDE
    $$dL_t = L_t \sum_{i=1}^\ell \delta_t^i dW_t^i + L_{t^-} \sum_{j=1}^k \phi_t^j (dN_t^j - \lambda_t^j dt), \quad L_0 = 1,$$
    such that $\mathbb{P}(L_t > 0) = 1$ for every $t \in [0,T]$. Consequently, there exists a probability measure $\mathbb{Q}$ equivalent to $\mathbb{P}$ defined by the Radon-Nikodym derivative $\frac{d\mathbb{Q}}{d\mathbb{P}}\big|_{\mathcal{F}_t} = L_t$.
    \item[ii.] Under the probability measure $\mathbb{Q}$, $N^1, \dots, N^k$ are counting processes with compensators $\lambda_t^i(1+\phi^i_t)$ for $i = 1, \dots, k$.
    \item[iii.] The processes $\tilde{W}^1, \dots, \tilde{W}^\ell$ defined by $d\tilde{W}_t^i = dW_t^i - \delta_t^i dt$, for $i= 1, \dots, \ell$, are standard Brownian motions under $\mathbb{Q}$.
    \item[iv.] Under the probability measure $\mathbb{Q}$, $N^1, \dots, N^k$ do not have common jumps, i.e., $\mathbb{Q}(\Delta N_t^i \Delta N_t^j = 0) = 1$ for every $t \in [0,T]$ and every $i \ne j \in \{1, \dots, k\}$.
\end{itemize}
\end{lemma}
As a consequence of this lemma, we derive the controlled auction market as follows.

\begin{corollary}[Controlled auction market]
Let $(\Omega,\mathcal{F},\mathbb{P}^0)$ be an auction market, and let $\alpha \in \mathcal{A}$. There exists a controlled probability measure $\mathbb{P}^\alpha \sim \mathbb{P}^0$ and a tuple $(N^a, N^b, P^*, N^p, P^p, N^q, P^q)$ such that:
\[dP^*_t = \sigma dW_t, \quad dP^p_t = \mu^p_t dt + \sigma d\tilde{W}^p_t, \quad \text{and} \quad dP^q_t = \mu^q_t dt + \sigma d\tilde{W}^q_t,\] 
where $W$, $\tilde{W}^p$, and $\tilde{W}^q$ are independent standard Brownian motions under $\mathbb{P}^{\alpha}$. Furthermore, $N_t^{a}$, $N_t^{b}$, $N_t^{p}$, and $N_t^{q}$ are counting processes with respective intensities $\lambda^a_t, \lambda^b_t, \lambda^p_t, \lambda^q_t$ under $\mathbb{P}^{\alpha}$, and they almost surely have no simultaneous jumps. 

The tuple controlled by $\alpha$ under this measure is called the controlled auction market model, and $\mathbb{P}^\alpha$ is the controlled market probability.
\end{corollary}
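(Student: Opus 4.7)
The plan is to apply Theorem \ref{thm:com} on the reference auction market $(\Omega,\mathcal F,\mathbb P^0)$ with a Girsanov kernel engineered to produce the desired dynamics. Given $\alpha=(\mu^p,\lambda^p;\mu^q,\lambda^q) \in \mathcal A$, I will take
\[
\delta^W = 0,\qquad \delta^p = \frac{\mu^p}{\sigma},\qquad \delta^q = \frac{\mu^q}{\sigma},
\]
acting on the Brownian motions $W,W^p,W^q$, and
\[
\phi^i = \frac{\lambda^i - \lambda_0}{\lambda_0},\qquad i\in\{a,b,p,q\},
\]
acting on the counting processes $N^a,N^b,N^p,N^q$, where $\lambda^a = \lambda_0\, i^a(\mu^p,\mu^q)\iota(d)$ and $\lambda^b = \lambda_0\, i^b(\mu^p,\mu^q)\iota(d)$. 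These processes are $\mathbb F$-predictable because $\mu^p,\mu^q,\lambda^p,\lambda^q$ are, and every $\phi^i$ satisfies $\phi^i\geq -1$ since all intensities are non-negative: $\lambda^p,\lambda^q$ take values in $\Lambda\subset\mathbb R_+$ by assumption, while $\lambda^a,\lambda^b$ are products of $\lambda_0$ with non-negative functions.

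The first step is to verify the integrability condition \eqref{novikov}. Since $B$ and $\Lambda$ are bounded, $|\delta^p|$ and $|\delta^q|$ are uniformly bounded. The monotone functions $i^a,i^b,\iota$ are bounded on their bounded arguments, so $\lambda^a,\lambda^b$ are uniformly bounded as well. Hence each $\phi^i$ lies in a bounded subset of $[-1,\infty)$, so $(1+\phi^i)\ln(1+\phi^i)+\phi^i$ and $|\delta^i|^2$ are uniformly bounded by a deterministic constant. The integrand inside the exponential in \eqref{novikov} is therefore deterministically bounded, and the expectation is finite.

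Granted this, I invoke Theorem \ref{thm:com} with these kernels. Part (i) yields a strictly positive $\mathbb P^0$-martingale $L^\alpha$ coinciding with the explicit formula displayed just above the corollary, together with the probability $\mathbb P^\alpha$ defined by $\frac{d\mathbb P^\alpha}{d\mathbb P^0}\big|_{\mathcal F_T} = L^\alpha_T$. Part (ii) provides that $N^a,N^b,N^p,N^q$ admit the $\mathbb F$-intensities $\lambda_0(1+\phi^i) = \lambda^i$ under $\mathbb P^\alpha$. Part (iii) gives that $W$ (unchanged, because $\delta^W = 0$) together with the shifted processes $\tilde W^p_t := W^p_t - \int_0^t \mu^p_s/\sigma\,ds$ and $\tilde W^q_t := W^q_t - \int_0^t \mu^q_s/\sigma\,ds$ are independent standard Brownian motions under $\mathbb P^\alpha$. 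Defining the state processes $P^*_t := \sigma W_t$, $P^p_t := \sigma W^p_t$, $P^q_t := \sigma W^q_t$ and substituting $dW^p_t = d\tilde W^p_t + (\mu^p_t/\sigma)\,dt$, $dW^q_t = d\tilde W^q_t + (\mu^q_t/\sigma)\,dt$ then produces $dP^*_t = \sigma dW_t$, $dP^p_t = \mu^p_t\,dt + \sigma\,d\tilde W^p_t$, and $dP^q_t = \mu^q_t\,dt + \sigma\,d\tilde W^q_t$, as required. Finally, part (iv) delivers the no-common-jumps property $\mathbb P^\alpha$-a.s. No genuine obstacle appears: the entire argument is a matter of matching the Girsanov kernels to the prescribed dynamics and verifying the integrability hypothesis \eqref{novikov}, which is immediate from the boundedness built into the admissible set $\mathcal A$.
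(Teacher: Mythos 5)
Your proposal is correct and follows exactly the route the paper takes: the paper itself sets $\delta^i=\mu^i/\sigma$ and $\phi^i=(\lambda^i-\lambda_0)/\lambda_0$ in the displayed SDE for $L^\alpha$ immediately before the corollary and notes that boundedness of $\mu$ and $\lambda$ gives condition \eqref{novikov}, so the corollary is read off from parts (i)--(iv) of Theorem \ref{thm:com} just as you do. Your explicit verification that $\phi^i\geq -1$ and that $\lambda^a,\lambda^b$ are bounded via the monotone functions $i^a,i^b,\iota$ is a slightly more careful spelling-out of what the paper leaves implicit, but the argument is the same.
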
 

\begin{remark}[Integrable spread and bounded trading intensity]
Assuming that $\mu$ and $\lambda$ are uniformly bounded is stronger than the standard Novikov condition required to well-define $L^\alpha$, but it is a convenient assumption to ensure that $L^\alpha$ admits moments of all orders, which simplifies the technical proofs that follow. From a practical standpoint, bounding $\lambda$ is highly realistic since algorithmic traders are inherently limited in their speed of execution by latency and technological constraints. 

If one wishes to model the spread controls as unbounded processes (for instance, driven by unbounded exogenous signals, as in \cite{derchu2023equilibria}), one could instead assume that the relevant stochastic integrals belong to the BMO (Bounded Mean Oscillation) martingale space, which is sufficient to guarantee that $L^\alpha$ preserves moments of all orders (see \cite[Chapter 3]{kazamaki2006continuous}).
\end{remark}

We denote by $\mathbb{E}^0$ the expectation with respect to the reference measure $\mathbb{P}^0$, and by $\mathbb{E}^{\alpha}$ the expectation with respect to the controlled measure $\mathbb{P}^{\alpha}$.
\section{Rebate design: A Principal-Multi-Agent framework}
\label{sec:principalagent}

We now investigate the exchange's problem of designing the rebate policy as a continuous-time Principal-Multi-Agent problem. We consider the stock exchange as the principal and the market participants as the agents, divided into two main categories: non-strategic investors (buyers and sellers) and strategic auction market makers. The exchange aims to design an optimal rebate policy to incentivize the strategic auction makers, while simultaneously setting a uniform transaction fee $d$ for the non-strategic investors. 

To formalize the contracts, we first introduce the market characteristics process as a tuple $X_t = (X^1_t, X^2_t, X^3_t, X^{4}_t, X^{5}_t, X^{6}_t, X^{7}_t)$, whose components represent the observable state variables of the auction up to time $t$:
\begin{itemize}
    \item \textbf{Non-strategic order volumes:} 
    $X^1_t := \sum_{i=1}^{N_t^a} v^a_i \mathbf{1}_{\{A_i \leq \theta(T - \tau^a_i)\}}$ and $X^2_t := \sum_{j=1}^{N_t^b} v^b_j \mathbf{1}_{\{B_j \leq \theta(T - \tau^b_j)\}}$ denote the aggregate volumes of non-cancelled buy and sell orders, respectively.
    \item \textbf{Reference price:} 
    $X^3_t := P_t^{*}$ is the efficient price of the asset.
    \item \textbf{Auction Maker $p$'s activity:} 
    $X^4_t := \int_0^t P^p_s dN^p_s$ is the cumulative sum of quoted prices, and $X^5_t := N_t^p$ is the total number of orders submitted by market maker $p$.
    \item \textbf{Auction Maker $q$'s activity:} 
    $X^6_t := \int_0^t P^{q}_s dN^{q}_s$ is the cumulative sum of quoted prices, and $X^7_t := N_t^q$ is the total number of orders submitted by market maker $q$.
\end{itemize}
For notational convenience, we denote by $P = (P^p, P^q)$ the bid price processes controlled by both market makers.\newline 
A \textbf{rebate contract} (or incentive scheme) $\xi = (\xi^p,\xi^q)$ is a pair of $\mathcal{F}_T$-measurable random variables representing the terminal payments offered by the exchange to Auction Maker $p$ and Auction Maker $q$, respectively. 

We define the set of admissible incentives $\mathcal{C}^{R_0}$ as the space of contracts that are integrable under any admissible control, and that satisfy the standard \textit{Individual Rationality (IR)} (or participation) constraint for both agents:
\begin{align*}
\mathcal{C}^{R_0} = \Bigg\{ \xi = (\xi^p, \xi^q): \;& \mathcal{F}_T\text{-measurable}, \; \sup_{\alpha \in \mathcal{A}} \mathbb{E}^{\alpha}[|\xi^p|^2] < \infty, \; \sup_{\alpha \in \mathcal{A}} \mathbb{E}^{\alpha}[|\xi^q|^2] < \infty, \\
& V^i_0(\xi^i, d, \alpha^j; \alpha^i) \geq R_0^i, \quad \text{for } i \neq j \in \{p,q\} \Bigg\},
\end{align*}
where $V^i_0(\xi^i,d,\alpha^j; \alpha^i)$ denotes the value function (expected terminal PnL) of Auction Maker $i$, which is rigorously defined in the next section, given the competitor's strategy $\alpha^j$ and their own strategy $\alpha^i$. The vector $R_0 := (R_0^p, R_0^q)$ represents the reservation utilities (outside options) of the two market makers. We denote by $\mathcal{C}^{R_0}_i$ the projection of $\mathcal{C}^{R_0}$ onto the $i$-th component, representing the set of admissible rebate contracts $\xi^i$ available to Auction Maker $i \in \{p,q\}$.



\subsection{Auction makers' Cournot-Nash equilibrium}
\label{sec:agentprb}

Strategic traders aim to maximize their individual expected gains simultaneously and without communication, by dynamically controlling their arrival intensities and quoted prices. An auction maker's objective consists of three components:
\begin{itemize}
\item a terminal payoff $g^i(X_T)$ for $i \in \{p,q\}$, resulting from their executed trades in the auction;
\item a terminal incentive (rebate) $\xi^i$ received from the exchange;
\item a running penalty term penalizing deviations of their trading intensity from the target baseline $\lambda_0$. As discussed, this models a preference for tracking a benchmark TWAP execution strategy, penalized quadratically.
\end{itemize}

For each order submitted at price $P^i_t$, the executed volume at time $T$ is $Q^i(P^{cl}_T) = K^i(P^i_t - P^{cl}_T)$. If $P^{cl}_T \leq P^i_t$, the order is a buy order ($Q^i \geq 0$). The auction maker buys at the clearing price $P^{cl}_T$ and holds an asset valued at the efficient price $P^*_T$, yielding a profit per share of $(P^*_T - P^{cl}_T)$. The total profit for this order is therefore $K^i(P^i_t - P^{cl}_T)(P^*_T - P^{cl}_T)$. Conversely, if $P^{cl}_T > P^i_t$, the order is a sell order ($Q^i < 0$). The auction maker sells at $P^{cl}_T$ and buys back at $P^*_T$, yielding a profit per share of $(P^{cl}_T - P^*_T)$. In this case, the total profit is $K^i(P^i_t - P^{cl}_T)(P^*_T - P^{cl}_T)$, which is algebraically identical.

Therefore, the terminal payoff function $g^i(X_T)$ representing auction maker $i$'s total profit from trading at the end of the auction is given by:
\begin{align*}
    g^i(X_T) &:= \int_0^T K^i(P^{cl}_T - P^i_{s})(P^{cl}_T - P^{*}_T)dN^i_s \\
    &= K^i(P^{cl}_T - P^{*}_T)\left(P^{cl}_T N^i_T - \int_0^T P^i_{s}dN^i_s\right).
\end{align*}

Recalling the definitions of the state variables $X_t$, we can explicitly write the payoffs for both auction makers:
\begin{align*}
    g^p(X_T) &= K^p \left(\frac{X_T^1 - X_T^2 + K^pX_T^4 + K^{q}X_T^6}{K^pX_T^5+K^{q}X_T^7} - X_T^3\right)\left(X_T^5\frac{X_T^1 - X_T^2 + K^pX_T^4 + K^{q}X_T^6}{K^pX_T^5+K^{q}X_T^7} - X_T^4\right),
\end{align*}
and
\begin{align*}
    g^q(X_T) &= K^q \left(\frac{X_T^1 - X_T^2 + K^pX_T^4 + K^{q}X_T^6}{K^pX_T^5+K^{q}X_T^7} - X_T^3\right)  \left(X_T^7\frac{X_T^1 - X_T^2 + K^pX_T^4 + K^{q}X_T^6}{K^pX_T^5+K^{q}X_T^7} - X_T^6\right).
\end{align*}

Assume that Auction Maker $i \in \{p,q\}$ chooses the control $\alpha^i=(\lambda^i,\mu^i)$ while their opponent Auction Maker $j \neq i$ chooses $\alpha^j=(\lambda^j,\mu^j)$. Given a fixed transaction fee $d$ and a rebate scheme $\xi=(\xi^p,\xi^q)$ proposed by the exchange, the expected value function of Auction Maker $i$ at time $t=0$ is defined as:
\[ V^i_0(\xi^i,d,\alpha^j; \alpha^i) := \mathbb{E}^{\alpha} \left[ g^i(X_T) + \xi^i - \int_0^T \frac{(\lambda^i_s - \lambda_0 )^2}{2} ds \right]. \]

\begin{remark}[Indirect impact of transaction fees]
Note that the transaction fee $d$ is only paid directly by the non-strategic investors. However, it explicitly dictates the intensity of the processes $N^a$ and $N^b$, thereby affecting the terminal components $X^1_T$ and $X^2_T$ and the final clearing price $P^{cl}_T$. Consequently, the exchange's choice of $d$ indirectly but significantly impacts the value functions of the auction makers.
\end{remark}

Given a fixed rebate contract $\xi = (\xi^p,\xi^q)$ and a transaction fee $d$, the auction makers face a non-zero-sum stochastic differential game. The problem is to find an admissible strategy profile $\alpha^\star=(\alpha^{p,\star},\alpha^{q,\star}) \in \mathcal{A}$ that forms a Nash Equilibrium, satisfying the system:
\begin{equation}\label{eq:NE}
\text{NE}(\xi,d):
\begin{cases}
V^{p}_0(\xi^p,d,\alpha^{q,\star};\alpha^{p,\star}) &= \sup_{\alpha^p \in \mathcal{A}^p} V^p_0(\xi^p,d,\alpha^{q,\star}; \alpha^p) \\
V^{q}_0(\xi^q,d,\alpha^{p,\star};\alpha^{q,\star}) &= \sup_{\alpha^q \in \mathcal{A}^q} V^q_0(\xi^q,d,\alpha^{p,\star}; \alpha^q).
\end{cases}
\end{equation}
Such a profile $\alpha^\star$ satisfying \eqref{eq:NE} defines a Nash Equilibrium composed of optimal spreads $\mu^{p,\star}, \mu^{q,\star}$ and trading intensities $\lambda^{p,\star}, \lambda^{q,\star}$ under the given exchange policy.

\subsubsection{Best response of a representative Auction Maker} 

To characterize the Nash equilibrium $\text{NE}(\xi,d)$ under a specific rebate and transaction fee scheme, we first study the best-response behavior of a single strategic trader. Suppose Auction Maker $p$ optimizes their strategy while the control of their competitor, $\alpha^q$, is fixed and known. Auction Maker $p$ aims at finding an optimal control $\alpha^{p,\star}$ such that:
\begin{align}\label{eq:traderproblem}
    V^{p}_0(\xi^p,d,\alpha^{q};\alpha^{p,\star}) &= \sup_{\alpha^p \in \mathcal{A}^p} V^p_0(\xi^p,d,\alpha^{q}; \alpha^p) \nonumber \\
    &= \sup_{\alpha^p \text{ s.t.} ( \alpha^p,\alpha^q)\in \mathcal{A}} \mathbb{E}^{\alpha} \left[g^p(X_T) + \xi^p - \int_0^T \frac{(\lambda^p_s - \lambda_0 )^2}{2} ds \right].
\end{align}

To simplify the notation for the jumps of the 7-dimensional state process $X_t$, we introduce the canonical basis vectors of $\mathbb{R}^7$, denoted by $(\mathbf{e}_1, \dots, \mathbf{e}_7)$. We consider the following set of incentive-compatible contracts for Auction Maker $p$, given the fixed strategy $\alpha^q$:
\[ \Xi^p := \left\{ Y_T^{p, Y^p_0, Z^p, d} \; \big| \; Z^p \in \mathcal{Z}^p, \; Y_0^p \geq R_0^p \right\}, \]
where the dynamic value process $Y_t^{p, Y^p_0, Z^p, d}$ is defined as:
\begin{align*}
    Y^{p,Y^p_0, Z^p,d}_t &= Y^p_0 - \int_0^t F^p(X_s, Z_s^p, r_s, P_s, \alpha^q_s, d; \hat{\alpha}^p_s) ds + \int_0^t Z^{1,p}_s (dN^{a}_s - \lambda_0 ds) \\
    &\quad + \int_0^t Z^{2,p}_s (dN^{b}_s - \lambda_0 ds) + \int_0^t Z^{3,p}_s dW_s + \int_0^t Z^{4,p}_s (dN^p_s - \lambda_0 ds) \\
    &\quad + \int_0^t Z^{5,p}_s dW^{p}_s + \int_0^t Z^{6,p}_s (dN^{q}_s - \lambda_0 ds) + \int_0^t Z^{7,p}_s dW^{q}_s,
\end{align*} 

where the space $\mathcal{Z}^p$ denotes the set of $\mathbb{F}$-predictable processes $Z^p$ taking values in $\mathbb{R}^{7}$ such that:
\[ \sup_{d \in \mathbb{R}_+} \sup_{\alpha^p \in \mathcal{A}^p} \sup_{t \in [0,T]} \mathbb{E}^{\alpha} \left[ |Y_t^{p;Y_0^p,Z^p,d}| \right] < \infty \quad \text{and} \quad \mathbb E\left[\int_0^T |Z^p_s|^2 ds\right] < \infty.\]
Here, the random jump sizes are defined by $r_s := \left( v_{(N_{s^-}^a+1)}^a \mathbf{1}_{\{A_{(N_{s^-}^a+1)} \leq \theta(T - s)\}}, \; v_{(N_{s^-}^b+1)}^b \mathbf{1}_{\{B_{(N_{s^-}^b+1)} \leq \theta(T - s)\}} \right)$. 
$F^p$ is given by:
\begin{align*} 
    &F^p(x, z, r, P, \mu^q, \lambda^q, d; \mu^p, \lambda^p) \\
    &= \big( g^p(x + r^1\mathbf{e}_1) - g^p(x) \big) \lambda^a(\mu,d) + \big( g^p(x + r^2\mathbf{e}_2) - g^p(x) \big) \lambda^b(\mu,d) \\
    &\quad + \big( g^p(x + P^p\mathbf{e}_4 + \mathbf{e}_5) - g^p(x) \big) \lambda^p + \big(g^p(x + P^q\mathbf{e}_6 + \mathbf{e}_7) - g^p(x)\big) \lambda^{q} \\
    &\quad - \frac{(\lambda^p - \lambda_0 )^2}{2} - z^{1,p} (\lambda_0 - \lambda^a(\mu,d)) - z^{2,p} (\lambda_0 - \lambda^b(\mu,d)) \\
    &\quad - z^{4,p}(\lambda_0 - \lambda^p) - z^{6,p} (\lambda_0 - \lambda^{q}) - \frac{\mu^p}{\sigma} z^{5,p} - \frac{\mu^q}{\sigma} z^{7,p},
\end{align*}
and the optimal response $\hat{\alpha}^p_s := (\hat{\mu}^p_s, \hat{\lambda}^p_s)$ is defined as the pointwise maximizer:
\[ \hat{\alpha}^p(x,z,r,P,d,\alpha^q) \in \underset{\mu^p \in B, \; \lambda^p \in \Lambda}{\text{arg sup}} F^p(x,z,r,P,\alpha^q,d;\mu^p,\lambda^p). \]

\begin{remark}
By symmetry, $\Xi^q$ denotes the set of contracts $Y_T^{q;Y_0^q,Z^q,d}$ indexed by $Z^q \in \mathcal{Z}^q$ with the corresponding generator $F^q$ when $\alpha^p$ is fixed. 
\end{remark}
We deduce the best optimal response of Auction Maker p regarding a rebates $\xi^p$ proposed by the exchange and a trading strategy $\alpha^q$ used by the opponent Auction maker q as follows. The proof is postponed to Appendix \ref{app:bestreact}.
\begin{lemma} \label{lemma:trader}
Fix $\alpha^q \in \mathcal{A}^q$ and $d \in \mathbb{R}_+$.

\begin{enumerate}
    \item $\mathcal{C}^{R_0^p}_p = \Xi^p$. In other words, for any incentive-compatible contract $\xi^p \in \mathcal{C}^{R_0^p}_p$, there exists a unique process $Z^p \in \mathcal{Z}^p$ such that $\xi^p = Y_T^{p,Y_0^p,Z^p,d}$.
    \item Conversely, for every $\xi^p = Y_T^{p,Y_0^p,Z^p,d}$, Auction Maker $p$'s optimal control $\hat{\alpha}^p = (\hat{\mu}, \hat{\lambda})$ is given explicitly by:
    \begin{align*}
        \hat{\mu}(x,z,r,\mu^{q},d) &= \text{root of } \Bigg\{ -\frac{z_5}{\sigma} + \lambda_0 \iota(d) \big(z_1 + g^p(x + r_1\mathbf{e}_1) - g^p(x) \big) \frac{\partial i^a(\mu,\mu^{q})}{\partial \mu^p} \\
        &\quad\quad + \lambda_0 \iota(d) \big(z_2 + g^p(x + r_2\mathbf{e}_2) - g^p(x) \big) \frac{\partial i^b(\mu,\mu^{q})}{\partial \mu^p} \Bigg\} \vee (-\mu_{\infty}) \wedge \mu_{\infty}; \\
        \hat{\lambda}(x,z,P^p) &= 0 \vee \Big\{ \lambda_0 + z_4 + g^p(x + P^p\mathbf{e}_4 + \mathbf{e}_5) - g^p(x) \Big\} \wedge \lambda_{\infty}.
    \end{align*}
\end{enumerate} 
\end{lemma}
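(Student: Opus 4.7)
My plan is to apply the classical martingale-optimality principle for principal-agent problems in the spirit of Cvitanic-Possamai-Touzi, suitably extended to this jump-diffusion setting with both controlled intensities and controlled drifts. I will treat the verification side first (Part 2, together with the inclusion $\Xi^p\subseteq \mathcal{C}^{R^p_0}_p$), since the specific form of the generator $F^p$ is only motivated from there; the representation side then gives the converse inclusion in Part 1.

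\textbf{Verification step.} Fix $\xi^p=Y_T^{p,Y^p_0,Z^p,d}\in\Xi^p$ and an arbitrary $\alpha^p\in\mathcal A^p$. The plan is to compute, under $\mathbb P^\alpha$, the dynamics of the auxiliary process
$$R_t := Y_t^{p,Y^p_0,Z^p,d} + g^p(X_t) - \int_0^t \tfrac{(\lambda^p_s-\lambda_0)^2}{2}\,ds.$$
Itô's formula for jump processes applied to $g^p(X_\cdot)$ produces jump-weighted terms matching the top line of $F^p$, while the Girsanov relations $dW^i_t = d\tilde W^i_t + (\mu^i_t/\sigma)\,dt$ and $dN^j_t - \lambda_0\,dt = (dN^j_t-\lambda^j_s\,dt) + (\lambda^j_s-\lambda_0)\,dt$ from the earlier corollary convert each $Z^p$-driven increment into a $\mathbb P^\alpha$-martingale plus an explicit drift. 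The algebraic identity one aims at is that the aggregate $\mathbb P^\alpha$-drift of $R_t$ collapses to
$$F^p(X_s,Z^p_s,r_s,P_s,\alpha^q_s,d;\alpha^p_s) - F^p(X_s,Z^p_s,r_s,P_s,\alpha^q_s,d;\hat a^p_s),$$
which is non-positive by definition of $\hat a^p$ and vanishes iff $\alpha^p_s=\hat a^p_s$. The admissibility of $Z^p$ together with the boundedness of the controls turns every stochastic integral into a true $\mathbb P^\alpha$-martingale, so taking expectations gives $V^p_0(\xi^p,d,\alpha^q;\alpha^p)\leq Y^p_0$ with equality at $\hat a^p$, hence $V^p_0\geq Y^p_0\geq R^p_0$. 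Part 2 then follows from pointwise maximization of $F^p$: quadratic concavity in $\lambda^p$ gives $\hat\lambda$ as the stated projection onto $\Lambda=[0,\lambda_\infty]$, and the first-order condition in $\mu^p$ combined with projection onto $B=[-\mu_\infty,\mu_\infty]$ yields the root formula for $\hat\mu$.

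\textbf{Representation step.} For the converse inclusion, fix $\xi^p\in\mathcal C^{R^p_0}_p$. Under $\mathbb P^0$ the filtration $\mathbb F$ is generated by three independent Brownian motions and four independent Poisson processes of deterministic intensity $\lambda_0$, so the martingale representation theorem for such filtrations yields a \emph{unique} decomposition of any $\mathbb P^0$-integrable $\mathcal F_T$-random variable as an initial value plus stochastic integrals against the seven compensated drivers. Applied to $\xi^p + \int_0^T F^p(\cdots;\hat a^p_s)\,ds$, this representation is really a BSDE in $(Y^p,Z^p)$; Lipschitz dependence of $F^p$ in $z$ (guaranteed by continuity of $i^a, i^b, \iota$ and the compactness of $B\times\Lambda$) and square-integrability of $\xi^p$ let one invoke standard wellposedness for BSDEs with jumps to produce a unique solution $(Y^p_0,Z^p)\in\mathbb R\times\mathcal Z^p$ with $\xi^p=Y_T^{p,Y^p_0,Z^p,d}$. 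The bound $Y^p_0\geq R^p_0$ then follows from the verification step, since the latter gives $V^p_0(\xi^p,d,\alpha^q;\hat a^p)=Y^p_0+g^p(X_0)$ and $\xi^p$ was assumed admissible.

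\textbf{Main obstacle.} The hardest part is the representation step. Two subtleties must be handled carefully: first, a measurable selection for $\hat a^p$ (using continuity of $F^p$ in $(\mu^p,\lambda^p)$ and compactness of $B\times\Lambda$) is needed so that $F^p(\cdots;\hat a^p)$ is a bona fide $\mathbb F$-predictable driver of the BSDE; second, one must verify that the unique $Z^p$ produced by the representation actually satisfies the uniform-in-$(\alpha^p,d,t)$ integrability bound $\sup_{\alpha^p,d,t}\mathbb E^\alpha|Y_t^{p,Y^p_0,Z^p,d}|<\infty$ encoded in $\mathcal Z^p$. This is where bounded controls, compactness of the action set, and the moments-of-all-orders property of $L^\alpha$ recorded in the earlier remark combine to produce the required uniform-in-$\alpha$ estimates.
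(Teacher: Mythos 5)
Your overall strategy is the same as the paper's --- a martingale-representation argument for the inclusion $\mathcal C^{R_0^p}_p\subseteq\Xi^p$ and a martingale-optimality/verification argument for the converse inclusion and for Part 2 --- but your representation step is realized differently. The paper first shows $g^p(X_t)\in L^1(\mathbb P^\alpha)$, then applies the martingale representation theorem directly \emph{under} $\mathbb P^\alpha$ to the conditional-expectation martingale $M_t=\mathbb E^\alpha[g^p(X_T)+\xi^p-\int_0^T\frac{(\lambda^p_s-\lambda_0)^2}{2}ds\,|\,\mathcal F_t]$, and then reads off $Z^p$ by an explicit change of integrands (combining the representation of $M$ with the It\^o decomposition of $g^p(X_\cdot)$); no BSDE existence theory is invoked, and uniqueness of $(Y_0,Z)$ is obtained by a comparison of dynamic value functions rather than by BSDE uniqueness. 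You instead work under $\mathbb P^0$ and pose a BSDE whose driver is the optimized Hamiltonian $z\mapsto\sup_{\mu,\lambda}F^p(\cdot,z,\cdot;\mu,\lambda)$. Your route is cleaner conceptually, and your observation that the sup of affine-in-$z$ functions with uniformly bounded slopes (thanks to compact $B\times\Lambda$) is Lipschitz is correct; you are also right to flag measurable selection for $\hat a^p$, which the paper glosses over. The one genuine caveat in your version: standard wellposedness for BSDEs with jumps requires square-integrable terminal data, whereas the admissibility class $\mathcal C^{R_0}$ only imposes $\sup_\alpha\mathbb E^\alpha[|\xi^p|]<\infty$, so you cannot invoke $L^2$ BSDE theory as stated --- you would need either an $L^1$ (or $L^p$, $p>1$) wellposedness result for BSDEs with jumps, or to follow the paper's route of representing the uniformly integrable martingale $M$ directly, which sidesteps the issue. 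Your verification step (supermartingale drift $F^p(\cdot;\alpha^p)-F^p(\cdot;\hat a^p)\le 0$, equality iff $\alpha^p=\hat a^p$, pointwise concave maximization in $\lambda^p$ and first-order condition in $\mu^p$ with projection onto the bounds) matches the paper's Part (c) in substance and is correct.
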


\subsubsection{Nash Equilibrium between Auction Makers}
\label{sec:nash}

We define the subset $U \subset \mathbb{R}^{7 \times 2}$ as the set of contract parameters $z$ such that for any state $(x,r,P,d) \in \mathbb{R}^7 \times \mathbb{R}^2 \times \mathbb{R}_+$, there exists a joint best-response $\hat{a}(x,z,r,P,d) = (\hat{a}^p(x,z,r,P,d); \hat{a}^q(x,z,r,P,d)) := (\hat{\mu}^p, \hat{\lambda}^p; \hat{\mu}^q, \hat{\lambda}^q)$ defined as the fixed point of the following system:
\begin{equation*}
(FP)
\begin{cases}
\hat{a}^p(x,z,r,P,d) \in \underset{\mu^p \in \Delta, \; \lambda^p \in \Lambda}{\text{arg sup}} F^p(x,z,r,P, \hat{a}^q(x,z,r,P,d),d;\mu^p,\lambda^p), \\
\hat{a}^q(x,z,r,P,d) \in \underset{\mu^q \in \Delta, \; \lambda^q \in \Lambda}{\text{arg sup}} F^q(x,z,r,P, \hat{a}^p(x,z,r,P,d),d;\mu^q,\lambda^q).
\end{cases}
\end{equation*}

\begin{remark}\label{rem:Ucharacterization} 
In this remark, we provide explicit examples and sufficient conditions ensuring that $U$ is non-empty. For notational convenience, we define the directional jump differences:
\[ Dg^{i,1}(x) := g^i(x + r^1 \mathbf{e}_1) - g^i(x), \quad Dg^{i,2}(x) := g^i(x + r^2 \mathbf{e}_2) - g^i(x). \]

\begin{enumerate}
    \item \textit{Rebates indexed on investors' activity only.} 
    Take $z^{k,i} = 0$ for any $k \in \{3,\dots,7\}$ and $i \in \{p,q\}$. This implies that only $z^{1,i}$ and $z^{2,i}$ are non-zero, meaning the exchange redistributes wealth solely based on the volume of buy and sell orders sent by non-strategic investors. Then, the system $(FP)$ is equivalent to:
    \[ 
    \begin{cases}
    \big( Dg^{p,1}(x)+z^{1,p} \big) \frac{\partial{\lambda^a}}{\partial \mu^p} (\hat{\mu},d) + \big( Dg^{p,2}(x)+z^{2,p} \big) \frac{\partial{\lambda^b}}{\partial \mu^p}(\hat{\mu},d) = 0, \quad \text{or } \hat{\mu}^p = \pm \mu_{\infty} \\
    \hat{\lambda}^p = \big( \lambda_0 + g^p(x + P^p\mathbf{e}_4 + \mathbf{e}_5) - g^p(x) \big) \\
    \big( Dg^{q,1}(x)+z^{1,q} \big) \frac{\partial{\lambda^a}}{\partial \mu^q} (\hat{\mu},d) + \big( Dg^{q,2}(x)+z^{2,q} \big) \frac{\partial{\lambda^b}}{\partial \mu^q}(\hat{\mu},d) = 0, \quad \text{or } \hat{\mu}^q = \pm \mu_{\infty} \\
    \hat{\lambda}^q = \big( \lambda_0 + g^q(x + P^q\mathbf{e}_6 + \mathbf{e}_7) - g^q(x) \big).
    \end{cases}
    \]
    The solution for $\hat{\lambda}$ is explicit. It remains to choose $z^{1,i}, z^{2,i}$ and the intensity functions $\lambda^a, \lambda^b$ to ensure the system for $\hat{\mu}$ admits a solution. For example, if we assume the exponential forms:
    \[\lambda^a = \lambda_0 e^{-c\mu^p - \tilde{c}\mu^q} e^{-d} \quad \text{and} \quad \lambda^b = \lambda_0 e^{c\mu^p + \tilde{c} \mu^q}e^{-d},\] 
    we obtain:
    \[ 
    \begin{cases}
    e^{2c\hat{\mu}^p + 2\tilde{c}\hat{\mu}^q} = \frac{Dg^{p,1}(x)+z^{1,p}}{Dg^{p,2}(x)+z^{2,p}} = \frac{Dg^{q,1}(x)+z^{1,q}}{Dg^{q,2}(x)+z^{2,q}}, \quad \text{or } \hat{\mu}^p, \hat{\mu}^q \in \{-\mu_{\infty}, \mu_{\infty}\} \\
    \hat{\lambda}^p = \big( \lambda_0 + g^p(x + P^p\mathbf{e}_4 + \mathbf{e}_5) - g^p(x) \big) \\
    \hat{\lambda}^q = \big( \lambda_0 + g^q(x + P^q\mathbf{e}_6 + \mathbf{e}_7) - g^q(x) \big).
    \end{cases}
    \]
    Consequently, the set $U$ contains all matrices $z \in \mathbb{R}^{7 \times 2}$ such that $z^{k,i} = 0$ for $k \in \{3,\dots,7\}$ and:
    \[ 
    \begin{cases}
    \frac{Dg^{p,1}(x)+z^{1,p}}{Dg^{p,2}(x)+z^{2,p}} = \frac{Dg^{q,1}(x)+z^{1,q}}{Dg^{q,2}(x)+z^{2,q}} > 0 \\
    Dg^{p,1}(x)+z^{1,p} < 0 \\
    Dg^{q,1}(x)+z^{1,q} < 0.
    \end{cases}
    \]

    \item \textit{Symmetric Auction Makers.}
    Assume that $\Lambda = [0,\lambda_\infty]$, $R_0^p = R_0^q$, $K^p = K^q$, and that the investor arrival intensities are perfectly symmetric with respect to the auction makers' controls:
    \[\lambda^a(\mu^p,\mu^q,\lambda^p,\lambda^q) = \lambda^a(\mu^q,\mu^p,\lambda^q,\lambda^p), \quad \lambda^b(\mu^p,\mu^q,\lambda^p,\lambda^q) = \lambda^b(\mu^q,\mu^p,\lambda^q,\lambda^p).\] 
    Due to symmetry, we have $Dg^{p,1} = Dg^{q,1} =: Dg^1$ and $Dg^{p,2} = Dg^{q,2} =: Dg^2$. Thus, $(FP)$ simplifies to:
    \[ 
    \begin{cases}
    -\frac{z^{5,p}}{\sigma} + (Dg^1(x)+z^{1,p})\frac{\partial \lambda^a}{\partial \mu^p}(\hat{\mu},d) + (Dg^2(x)+z^{2,p})\frac{\partial \lambda^b}{\partial \mu^p}(\hat{\mu},d) = 0, \quad \text{or } \hat{\mu}^p = \pm \mu_{\infty} \\
    -\frac{z^{7,q}}{\sigma} + (Dg^1(x)+z^{1,q})\frac{\partial \lambda^a}{\partial \mu^q}(\hat{\mu},d) + (Dg^2(x)+z^{2,q})\frac{\partial \lambda^b}{\partial \mu^q}(\hat{\mu},d) = 0, \quad \text{or } \hat{\mu}^q = \pm \mu_{\infty} \\
    \hat{\lambda}^p = \big( \lambda_0 + g^p(x + P^p\mathbf{e}_4 + \mathbf{e}_5) - g^p(x) + z^{4,p} \big) \\
    \hat{\lambda}^q = \big( \lambda_0 + g^q(x + P^q\mathbf{e}_6 + \mathbf{e}_7) - g^q(x) + z^{6,q} \big).
    \end{cases}
    \]
    In particular, if we use the symmetric exponential form $\lambda^a = \lambda_0 e^{-c(\mu^p+\mu^q)} e^{-d}$ and $\lambda^b = \lambda_0 e^{c(\mu^p + \mu^q)}e^{-d}$, we obtain:
    \[ 
    \begin{cases}
    -\frac{z^{5,p}}{\sigma} - c(Dg^1(x)+z^{1,p})e^{-c(\hat{\mu}^p + \hat{\mu}^q)} + c(Dg^2(x)+z^{2,p})e^{c(\hat{\mu}^p + \hat{\mu}^q)} = 0, \quad \text{or } \hat{\mu}^p = \pm \mu_{\infty} \\
    -\frac{z^{7,q}}{\sigma} - c(Dg^1(x)+z^{1,q})e^{-c(\hat{\mu}^p + \hat{\mu}^q)} + c(Dg^2(x)+z^{2,q})e^{c(\hat{\mu}^p + \hat{\mu}^q)} = 0, \quad \text{or } \hat{\mu}^q = \pm \mu_{\infty}.
    \end{cases}
    \]
    If we choose symmetric contract parameters $z^{1,p} = z^{1,q} = z^1$, $z^{2,p} = z^{2,q} = z^2$, and $z^{5,p} = z^{7,q} = \tilde{z}$, and if there exists a positive constant $\gamma$ satisfying both
    \begin{equation}\label{gamma}
    -\frac{\tilde{z}}{\sigma} - c(Dg^1(x)+z^{1})\frac{1}{\gamma} + c(Dg^2(x)+z^{2})\gamma = 0,
    \end{equation}
    and the second-order maximum condition
    \begin{equation}\label{eq:z_condition} 
    Dg^1(x)+z^{1} < 0, \quad Dg^2(x)+z^{2} < 0,
    \end{equation}
    then $(FP)$ admits a solution in $U$ given explicitly by:
    \[ 
    \begin{cases}
    \hat{\mu}^p + \hat{\mu}^q = \frac{\ln(\gamma)}{c} \\
    \hat{\lambda}^p = \big( \lambda_0 + g^p(x + P^p\mathbf{e}_4 + \mathbf{e}_5) - g^p(x) + z^{4,p} \big) \\
    \hat{\lambda}^q = \big( \lambda_0 + g^q(x + P^q\mathbf{e}_6 + \mathbf{e}_7) - g^q(x) + z^{6,q} \big).
    \end{cases}
    \]
\end{enumerate}
Therefore, the admissible set $U$ naturally contains the locus of points $z \in \mathbb{R}^{7 \times 2}$ such that $z^{1,p} = z^{1,q}$, $z^{2,p} = z^{2,q}$, $z^{5,p} = z^{7,q}$, and for which Equations \eqref{gamma} and \eqref{eq:z_condition} admit a real root $\gamma > 0$. 
\end{remark}

Our ultimate goal is to characterize the set of rebate contracts $(\xi,d)$ such that $\text{NE}(\xi,d)$ admits a solution. Extending the definition of $\Xi^p$, we now consider the following joint set of contracts:
\[ \Xi := \left\{ Y_T^{Y_0, Z, d} \;\big|\; Z \in \mathcal{Z}, \; Y_0^i \geq R_0^i, \; i \in \{p,q\} \right\}, \]
where the 2-dimensional dynamic value process $Y_t^{Y_0, Z, d} := (Y_t^{p; Z,d}, Y_t^{q; Z,d})$ is defined by the coupled It\^o's decomposition:
\begin{align*}
    Y^{i;Z,d}_t &= Y^i_0 - \int_0^t F^i(X_s, Z_s^i, r_s, P_s, \hat{\alpha}^j_s, d; \hat{\alpha}^i_s) ds + \int_0^t Z^{1,i}_s (dN^{a}_s - \lambda_0 ds) \\
    &\quad + \int_0^t Z^{2,i}_s (dN^{b}_s - \lambda_0 ds) + \int_0^t Z^{3,i}_s dW_s + \int_0^t Z^{4,i}_s (dN^p_s - \lambda_0 ds) \\
    &\quad + \int_0^t Z^{5,i}_s dW^{p}_s + \int_0^t Z^{6,i}_s (dN^{q}_s - \lambda_0 ds) + \int_0^t Z^{7,i}_s dW^{q}_s.
\end{align*} 
Here, $\hat{\alpha}^i_s := \hat{a}^i(X_s, Z_s, r_s, P_s, d)$, and $\mathcal{Z}$ denotes the space of random $\mathbb{F}$-predictable processes $Z = (Z^p, Z^q) \in \mathcal{Z}^p \times \mathcal{Z}^q$. Finally, we denote by $\mathcal{U}$ the subset of $\mathcal{Z}$ such that the process $Z$ is strictly $U$-valued.

Recall that an admissible pair of controls $\alpha = (\hat{\alpha}^p,\hat{\alpha}^q) \in \mathcal{A}$ constitutes a Nash equilibrium for the two auction makers if it solves the system:
\begin{equation*}
\text{NE}(\xi,d):
\begin{cases}
V^{p}_0(\xi^p,d,\hat{\alpha}^{q};\hat{\alpha}^{p}) &= \sup_{\alpha^p \in \mathcal{A}^p} V^p_0(\xi^p,d,\hat{\alpha}^{q}; \alpha^p) \\
V^{q}_0(\xi^q,d,\hat{\alpha}^{p};\hat{\alpha}^{q}) &= \sup_{\alpha^q \in \mathcal{A}^q} V^q_0(\xi^q,d,\hat{\alpha}^{p}; \alpha^q).
\end{cases}
\end{equation*}

Lemma \ref{lemma:trader} demonstrates that the optimal control for each auction maker solving $\text{NE}(\xi,d)$ depends not only on the exchange's incentives and transaction fees, but also heavily on the strategy of their competitor. Consequently, we aim to identify the specific set of incentives and fees that successfully induce a Nash equilibrium between the two auction makers. Without a properly designed incentive and fee structure, the auction makers might fail to reach an equilibrium, resulting in an endless cycle of strategy adjustments. It requires the formal characterization of the set of contracts $(\xi,d)$ such that $\text{NE}(\xi,d) \neq \emptyset$.

\begin{theorem}
\label{thm:main}
For a given contract $\xi \in \Xi$ and a transaction fee $d \in \mathbb{R}_+$, the following conditions are equivalent:

\begin{itemize}
    \item[(i)] $\text{NE}(\xi,d) \neq \emptyset$, meaning that there exists a Nash equilibrium $\alpha = ( \alpha^p,  \alpha^q) \in \mathcal{A}$;
    \item[(ii)] there exist unique initial values $Y_0 = (Y_0^p, Y_0^q) \in \mathbb{R}^2$ and a unique predictable process $Z = (Z^{p},Z^{q}) \in \mathcal{U}$ such that $\xi = (\xi^p, \xi^q) = (Y_T^{p, Y_0^p, Z^p, d}, Y_T^{q, Y_0^q, Z^q, d})$, where for each $i \in \{p,q\}$, the dynamic value process $Y^{i, Z^i, d}$ satisfies:
    \begin{align*}
        Y_t^{i,Z^i,d} &= Y^i_0 + \int_0^t Z^{1,i}_s (dN^{a}_s - \lambda_0 ds) + \int_0^t Z^{2,i}_s (dN^{b}_s - \lambda_0 ds) + \int_0^t Z^{3,i}_s dW_s \\
        &\quad + \int_0^t Z^{4,i}_s (dN^p_s - \lambda_0 ds) + \int_0^t Z^{5,i}_s dW^{p}_s + \int_0^t Z^{6,i}_s (dN^{q}_s - \lambda_0 ds) \\
        &\quad + \int_0^t Z^{7,i}_s dW^{q}_s - \int_0^t F^i(X_s,Z_s,r_s,P_s, \hat{\alpha}^j_s,d;\hat{\alpha}^i_s) ds,
    \end{align*}
    with $\hat{\alpha}^i_s = \hat{a}^i(X_s,Z_s,r_s,P_s,d)$ being the solution to the fixed-point system $(FP)$.
\end{itemize}

Furthermore, if the above conditions are satisfied, then the equilibrium strategy profile $\alpha$ from (i) coincides with the fixed point $\hat{\alpha}$ characterized in (ii).
\end{theorem}

\begin{proof}
    Assume that there exists a Nash equilibrium $\alpha=(\alpha^p,\alpha^q)$. Then, condition (ii) is a direct consequence of applying Lemma \ref{lemma:trader} simultaneously to both Auction Maker $p$ and Auction Maker $q$, under the assumption that there exists a solution to the fixed-point system $(FP)$. 
    
    Conversely, assume that condition (ii) is satisfied with the fixed-point control $\hat{\alpha}$. By Part 2 of Lemma \ref{lemma:trader}, $\hat{\alpha}^i$ is exactly the optimal best response of Auction Maker $i$ given the control $\hat{\alpha}^j$ played by the competing auction maker. Therefore, the pair of controls are mutually optimal, yielding $\hat{\alpha} \in \text{NE}(\xi,d)$, where the contract is generated as $\xi = (Y_T^{p, Y_0^p, Z^p, d}, Y_T^{q, Y_0^q, Z^q, d}) \in \Xi$.
\end{proof}

\begin{corollary}[Symmetric Auction Makers and Nash equilibrium] \label{cor:Nash}
Assume the symmetric setup defined in Remark \ref{rem:Ucharacterization}, where $\Lambda = [0,\lambda_\infty]$, $R_0^p = R_0^q$, $K^p = K^q$, and the intensities are given by:
\[\lambda^a(\mu,d) = \lambda_0 e^{-c(\mu^p+\mu^q)} e^{-d} \quad \text{and} \quad \lambda^b(\mu,d) = \lambda_0 e^{c(\mu^p + \mu^q)}e^{-d}.\]
Then, there exists a Nash equilibrium, i.e., $\text{NE}(\xi,d) \neq \emptyset$, if the contract is of the form $\xi = Y_T^{Y_0,Z,d} \in \Xi$, where the contract parameters satisfy $Z^{1,p}=Z^{1,q}$, $Z^{2,p}=Z^{2,q}$, $Z^{5,p}=Z^{7,q}$, and there exists a $\gamma > 0$ satisfying Equation \eqref{gamma}.

In this case, the optimal controls at equilibrium are explicitly given by:
\[ 
\begin{cases}
    \hat{\mu}^p_t + \hat{\mu}^q_t = \frac{\ln(\gamma)}{c} \\
    \hat{\lambda}^p_t = 0 \vee \big\{ \lambda_0 + g^p(X_{t^-} + P^p_t\mathbf{e}_4 + \mathbf{e}_5) - g^p(X_{t^-}) + Z^{4,p}_t \big\} \wedge \lambda_{\infty} \\
    \hat{\lambda}_t^q = 0 \vee \big\{ \lambda_0 + g^q(X_{t^-} + P^q_t\mathbf{e}_6 + \mathbf{e}_7) - g^q(X_{t^-}) + Z^{6,q}_t \big\} \wedge \lambda_{\infty}.
\end{cases}
\]
\end{corollary}

\section{Exchange's Problem: Optimal rebates and transaction fees}
\label{sec:principalprb}

The principal—the stock exchange—aims to minimize market inefficiency (defined as the squared deviation of the clearing price from the efficient price), minimize the total incentives paid out to the strategic auction makers, and maximize the total transaction fees collected from non-strategic investors. Recalling the definitions of the state variables $X^1_T$ and $X^2_T$ for the final aggregated non-cancelled volumes, we define the exchange's expected cost functional as:
\[ \rho(\xi,d,\alpha) = \mathbb{E}^{\alpha}\Big[ |P^{cl}_T - P_T^{*}|^2 + \xi^p + \xi^q - d(X^1_T + X^2_T) \Big]. \]
The exchange seeks to find the optimal admissible contract pair $(\xi,d)$ that minimizes $\rho(\xi,d,\alpha)$. \vspace{0.5em}

To formalize the exchange's problem in the presence of potentially multiple equilibria, we define the set of Principal-preferred equilibria as:
\[ \text{NEI}(\xi,d) = \big\{ \alpha \in \text{NE}(\xi,d) : \rho(\xi,d,\alpha) \leq \rho(\xi,d,\beta), \text{ for any } \beta \in \text{NE}(\xi,d) \big\}. \]

\begin{remark}[Optimistic approach to multiple equilibria]
The reason we introduce the set $\text{NEI}(\xi,d)$ is because certain contract pairs $(\xi,d)$ might induce a game that admits multiple Nash equilibrium strategy profiles. In line with the standard mechanism design literature, we adopt the ``optimistic approach'' and assume that the auction makers will select the Nash equilibrium that yields the greatest benefit (lowest cost) to the exchange. In practice, the exchange effectively proposes recommended prices and arrival intensities to the strategic traders along with the contract, serving as a coordinating signal, as explained in \cite{elie2019contracting}. Unlike \cite[Assumption 3.1]{hernandez2024principal}, we do not assume the uniqueness of the Nash equilibrium a priori, but rather defer the selection to the exchange's recommended actions, following the seminal contracting framework of Holmström and Milgrom \cite{holmstrom1987aggregation}.
\end{remark}

For any fee $d \in \mathbb{R}_+$, we define $\Xi(d) = \{ \xi \in \Xi : \text{NE}(\xi,d) \neq \emptyset \}$ as the set of implementable rebate contracts that successfully induce a Nash equilibrium between the auction makers. 

Let $\hat{\alpha}(\xi,d) \in \text{NEI}(\xi,d)$ denote the equilibrium strategy. We define the exchange's  optimization problem as:
\begin{align}
    \rho &= \min_{d>0, \; \xi \in \Xi(d)} \rho(\xi, d, \hat{\alpha}(\xi,d)) \nonumber \\
    &= \min_{d > 0, \; \xi \in \Xi(d)} \mathbb{E}^{\hat{\alpha}(\xi,d)}\Big[ |P^{cl}_T - P_T^{*}|^2 + \xi^p + \xi^q - d(X^1_T + X^2_T) \Big] \label{eq:ex} \\
    &\text{subject to } \quad V_0^{p}(\xi^p,d,\hat{\alpha}) \geq R_0^p, \quad V_0^{q}(\xi^q,d,\hat{\alpha}) \geq R_0^q, \label{eq:ex_const}
\end{align}
where $R_0 = (R_0^p, R_0^q) \in \mathbb{R}^2$ is the vector of reservation utilities for Auction Makers $p$ and $q$. The values $R_0^p$ and $R_0^q$ can be naturally interpreted, for example, as the expected PnL each auction maker would guarantee if they simply provided liquidity on a co-existing continuous limit order book at the fair price $P^*$ over the duration $T$.

Due to the simplicity of the transaction fee structure for non-strategic investors (parameterized entirely by the scalar $d$), the problem can be decoupled. We rely on numerical methods to find the globally optimal fee $\hat{d}$. Fixing a candidate $d>0$, the exchange first optimizes the rebate contract dynamically by solving the inner continuous-time optimization problem:
\begin{equation}\label{eq:exchange_incentive}
    \rho(d) = \min_{\xi \in \Xi(d)} \mathbb{E}^{\hat{\alpha}(\xi,d)}\Big[ |P^{cl}_T - P_T^{*}|^2 + \xi^p + \xi^q - d(X^1_T + X^2_T) \Big],
\end{equation}
yielding the conditionally optimal rebate policy $\hat{\xi}(d)$. The global solution is then found by minimizing $\rho(d)$ over $d > 0$.

\begin{remark}
While the optimal rebate contract $\xi$ can be characterized analytically by verification (see the section below) for a fixed $d$, the global objective function $\rho(d)$ is fundamentally intractable for joint analytical optimization, necessitating a numerical approach for $\hat{d}$. The fee $d$ enters the objective in highly non-linear ways that may break global convexity. First, in the revenue term, $d$ linearly scales the fees but exponentially suppresses the non-strategic arrival intensities (e.g., $\lambda \propto e^{-d}$), creating a concave revenue profile for large $d$. More importantly, $d$ is deeply embedded inside the market inefficiency penalty $\mathbb{E}[|P^{cl}_T - P_T^{*}|^2]$. The clearing price $P^{cl}_T$ depends on the final state of the cumulative order flow, which is governed by Poisson processes whose intensities are functions of both $d$ and the strategic best-responses $\hat{\alpha}(\xi,d)$ of the auction makers. We therefore decouple the problem, optimizing $\xi$ theoretically for a fixed $d$ and relying on numerical methods to find the optimal scalar $\hat{d}$ with grid search.
\end{remark}
\subsection{Saturated utility reservations}\label{section:saturated}

Recalling the characterization in Theorem \ref{thm:main}, solving the exchange's problem is equivalent to finding optimal controls $(\hat{Y}_0^p, \hat{Y}_0^q, \hat{Z}^p, \hat{Z}^q)$ such that $\xi^p \in \Xi^p$ and $\xi^q \in \Xi^q$. 

Because the auction makers are risk-neutral, the dynamic value process translates linearly with its initial value, meaning $Y_T^{i, Y_0^i, Z^i} = Y_0^i + Y_T^{i, 0, Z^i}$ for $i \in \{p,q\}$. This structural property allows us to separate the initial utility allocation from the dynamic stochastic incentives. Therefore, the exchange's objective function can be rewritten as:
\begin{align*}
    \rho(d) &= \inf_{\substack{Y^p_0 \geq R_0^p, \; Y^q_0 \geq R_0^q \\ Z \in \mathcal{Z}}} \mathbb{E}^{\hat{\alpha}(Z,d)}\Big[ |P^{cl}_T - P_T^{*}|^2 + Y_T^{p, Y_0^p, Z^p} + Y_T^{q, Y_0^q, Z^q} - d(X_T^1 + X_T^2) \Big] \\
    &= \inf_{\substack{Y^p_0 \geq R_0^p, \; Y^q_0 \geq R_0^q \\ Z \in \mathcal{Z}}} \mathbb{E}^{\hat{\alpha}(Z,d)}\Big[ |P^{cl}_T - P_T^{*}|^2 + Y_0^p + Y_0^q + Y_T^{p, 0, Z^p} + Y_T^{q, 0, Z^q} - d(X_T^1 + X_T^2) \Big] \\
    &= \inf_{Y^p_0 \geq R_0^p, \; Y^q_0 \geq R_0^q} \big( Y_0^p + Y_0^q \big) + \inf_{Z \in \mathcal{Z}} \mathbb{E}^{\hat{\alpha}(Z,d)}\Big[ |P^{cl}_T - P_T^{*}|^2 + Y_T^{p, 0, Z^p} + Y_T^{q, 0, Z^q} - d(X_T^1 + X_T^2) \Big] \\
    &= R_0^p + R_0^q + \inf_{Z \in \mathcal{U}} \mathbb{E}^{\hat{\alpha}(Z,d)}\Big[ |P^{cl}_T - P_T^{*}|^2 + Y_T^{p, 0, Z^p} + Y_T^{q, 0, Z^q} - d(X_T^1 + X_T^2) \Big].
\end{align*}
Consequently, the optimization problem for the exchange reduces entirely to the continuous-time optimization of the $Z$ process:
\begin{equation}\label{opt:tilderho} 
  V^P_0(x,0,r,d):=  \inf_{Z \in \mathcal{Z}} \mathbb{E}^{\hat{\alpha}(Z,d)}\Big[ |P^{cl}_T - P_T^{*}|^2 + Y_T^{p, 0, Z^p} + Y_T^{q, 0, Z^q} - d(X_T^1 + X_T^2) \Big].
\end{equation}

\subsection{Verification theorems}
We turn to the solution of \eqref{opt:tilderho} by using a verification argument. We first define $\mathcal C'([0,T],\mathcal S)$ as the set of function $\phi$ defined from $t\in [0,T] \times (x,y,r,p;z,d)\in \mathcal S= \mathbb R^7\times \mathbb R^2\times \mathbb R^2\times \mathbb R^2\times \mathbb R_+$ into $\mathbb R$, which are continuously differentiable in time $t$, twice continuously differentiable with respect to the variables $x_3,p,y$ and continuous in the other variables. We then define the Dynkin operator $\mathcal L^z$ associated with the processes $X,Y$ for any $\phi\in \mathcal C'([0,T],\mathcal S) $ and $z\in \mathbb R^{7\times 2}$ by 
\begin{align*}
    &\mathcal L^z\phi(t,x,y,r,p;d)\\
    &:= \sum_{i = p,q} \frac{\partial \phi}{\partial y_i} (t, x, y,r,p) \Bigl\{\lambda_0(z^{1,i} + z^{2,i} +z^{4,i} +z^{6,i} ) - F^i(x,z,r,p, \hat \alpha^j_s,d;\hat \alpha^i_s) -  z^{5,i} \frac{\hat\mu^p}{\sigma} - z^{7,i} \frac{\hat\mu^q_s}{\sigma} \Bigl\} \\
    &+\frac{1}{2} \frac{\partial^2 \phi}{\partial x_3 \partial x_3} (t, x,y,r,p) \sigma^2 + \frac{1}{2}\sum_{i,j=p,q} \frac{\partial^2 \phi}{\partial y_i \partial y_j} (t, x,y,r,p) \Bigl\{z^{3,i} z^{3,j} + z^{5,i}z^{5,j}+z^{7,i}z^{7,j}\Bigl\}\\
    &- \sum_{j=p,q}  \frac{\partial^2 \phi}{\partial x_3 \partial y_j} (t, x,y,r,p) \sigma z^{3,j}  + \sum_{i=p,q} \frac{\partial \phi}{\partial p_i} (t, x,y,r,p)\hat\mu_t^i +\frac{1}{2}\sum_{i =p,q}  \frac{\partial^2 \phi}{\partial p_i \partial p_i} (t, x,y,r,p) \sigma^2 \\
    &-\sum_{i = p,q}\frac{\partial^2 \phi}{\partial y_i \partial p_p} (t, x,y,r,p) \sigma z^{5,i} - \sum_{i = p,q}\frac{\partial^2 \phi}{\partial y_i \partial p_q} (t, x,y,r,p) \sigma z^{7,i} \\
    &+  [\phi(t,x+[r_1 , 0,0, 0,0,0,0],y +[-z^{1,p},-z^{1,q}],r,p)  - \phi(t, x,y,r,p)]  \lambda^a_t \\
    &+ [\phi(t,x+[0 , r_2,0, 0,0,0,0],y +[-z^{2,p},-z^{2,q}],r,p)  - \phi(t, x,y,r,p)]  \lambda^b_t\\
    &+ [\phi(t,x+[0 , 0,0, p^p,1,0,0],y +[-z^{4,p},-z^{4,q}],r,p)  - \phi(t, x,y,r,p)] \hat\lambda^p_t\\
    &+ [\phi(t,x+[0 , 0,0, 0,0,p^q,1],y +[-z^{6,p},-z^{6,q}],r,p)  - \phi(t, x,y,r,p)] \hat\lambda^q_t,
\end{align*}
where  $\lambda^a,\lambda^b,\hat\alpha=(\hat\alpha^p;\hat\alpha^q)=(\hat\mu^p,\hat\lambda^p;\hat\mu^q,\hat\lambda^p)$, are functions of $(x,z,p,d)$ from Theorem \ref{thm:main}.

\begin{theorem}[Verification and optimal rebates]\label{thm:hjb}Let $d>0$ be fixed. Assume that
        \begin{itemize}
            \item there exists a function $\phi \in C'([0,T],\mathcal{S})  $ satisfying\\
            \begin{equation}
            \left\{
            \begin{aligned}
                &\frac{\partial \phi}{\partial t} + \inf_{z\in \mathbb R^{7\times 2}} \mathcal L^z\phi(t,x,y,r,p;d) = 0,\text{ }t<T, \\
                &\phi(T,x,y,r,p;d) = \left(\frac{x_1 - x_2 + K^p x_4 + K^q x_6 + K^0 P_0^*}{K^p x_5 + K^q x_7 + K^0} - x_3\right)^2 + y \cdot I_2 - d(x_1 + x_2),\; (x,y,r,p)\in \mathcal S;
            \end{aligned}
            \right.
            \label{eq:zsolution} 
            \end{equation}
            
            \item there exists $n_1,n_2, n_3 \in \mathbb{N}$ and $C \in \mathbb{R}$ s.t. $|\phi(t,x,y,r,p,d)| < C(1+ |x|^{2}+|y|+|r|^{n_1}+|p|^{n_2})$ for every  $(t,x,y,r,p) \in [0,T] \times\mathcal{S}$;
            \item for every $(t,x,y,r,p) \in [0,T] \times \mathcal{S} $, there exists $\hat z(d)=\hat{z}(t,x,y,r,p;d) \in \mathbb{R}^{7\times 2}$ such that 
            \[
            \mathcal L^{\hat z(d)}\phi(t,x,y,r,p;d) =  \inf_{z\in \mathbb R^{7\times 2}}\mathcal L^z\phi(t,x,y,r,p;d)\] and such that $\hat Z_t(d)=\hat{Z}(t,X_t,Y_t,r_t,P_t;d) \in \mathcal{Z}$.
        \end{itemize}
        Then $\rho(d) =R_0^p  +R_0^q+ \phi(0,X_0, Y_0, r_0,p_0)$ and $\hat Z(d)$ is the optimal control of the exchange's problem \eqref{opt:tilderho}. Consequently, the optimal rebate policy is 
        \[\xi^p=Y
        _T^{p,\hat Z^p(d),d},\; \xi^p=Y
        _T^{q,\hat Z^q(d),d}, \]
        where $Y _T^{i,z,d}$ is defined by Theorem \ref{thm:main}.
\end{theorem}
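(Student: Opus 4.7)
This is a classical verification argument for a controlled jump-diffusion. I would proceed in four steps: (i) apply the Itô-Meyer formula to $\phi(t, X_t, Y_t, r_t, P_t)$ on $(\Omega, \mathcal{F}, \mathbb{P}^{\hat\alpha(Z,d)})$ for an arbitrary admissible $Z \in \mathcal{Z}$; (ii) rewrite the resulting drift in terms of the Dynkin operator $A$; (iii) invoke the HJB equation to obtain $\phi(0,\cdot) \leq \mathbb{E}^{\hat\alpha(Z,d)}[\phi(T,\cdot)]$ with equality at $Z = \hat Z(d)$; and (iv) match the terminal condition with the cost functional in \eqref{opt:tilderho}.

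\textbf{Itô expansion and drift identification.} Fix $d > 0$ and $Z \in \mathcal{Z}$ (hence $U$-valued, so that the fixed point $\hat\alpha(Z,d)$ of Theorem \ref{thm:main} is well-defined). Under $\mathbb{P}^{\hat\alpha(Z,d)}$, Theorem \ref{thm:com} ensures that $W, \tilde W^p, \tilde W^q$ are independent Brownian motions and that $N^a, N^b, N^p, N^q$ are counting processes with intensities $\lambda^a, \lambda^b, \hat\lambda^p, \hat\lambda^q$ and no common jumps. Applying Itô's formula for semimartingales with jumps to $\phi(t, X_t, Y_t, r_t, P_t)$, rewriting the Poisson differentials $dN^j - \lambda_0\, ds$ against the $\mathbb{P}^{\hat\alpha}$-compensators $dN^j - \lambda^j\, ds$, and the Brownian increments $dW^p, dW^q$ in terms of $d\tilde W^p, d\tilde W^q$, I collect all absolutely continuous contributions and check that they reconstitute exactly
\[\int_0^T \Bigl(\tfrac{\partial\phi}{\partial t} + A\phi\Bigr)(s, X_s, Y_s, r_s, P_s; Z_s, d)\, ds,\]
while the remaining terms assemble into a local martingale $M$ built from Brownian stochastic integrals and compensated Poisson integrals.

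\textbf{Passage to expectation and optimality.} The polynomial growth bound $|\phi(t,x,y,r,p)| \leq C(1 + |x|^2 + |y| + |r|^{n_1} + |p|^{n_2})$, combined with boundedness of $r$ and the controls $(\mu,\lambda)$, and the moment estimates on $X, Y, P$ obtained exactly as in Part (a) of the proof of Lemma \ref{lemma:trader} (using Cauchy-Schwarz and explicit Poisson/Gaussian moment formulas), yields $\mathbb{E}^{\hat\alpha}[\sup_{t\leq T}|\phi(t, X_t, Y_t, r_t, P_t)|] < \infty$; a standard localization argument then upgrades $M$ to a true $\mathbb{P}^{\hat\alpha(Z,d)}$-martingale. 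Taking expectations of the Itô expansion and using $\tfrac{\partial\phi}{\partial t} + A\phi(\cdot; Z_s, d) \geq \tfrac{\partial\phi}{\partial t} + \inf_z A\phi(\cdot; z, d) = 0$, together with the terminal condition, gives
\[\phi(0, X_0, Y_0, r_0, P_0) \leq \mathbb{E}^{\hat\alpha(Z,d)}\!\left[|P_T^{cl} - P_T^{*}|^2 + Y_T \cdot I_2 - d(X_T^1 + X_T^2)\right].\]
Infimizing over $Z \in \mathcal{Z}$ and invoking the reduction of Section \ref{section:saturated} produces the lower bound $\rho(d) \geq R_0^p + R_0^q + \phi(0, X_0, Y_0, r_0, P_0)$. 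For the candidate $Z_t = \hat Z(t, X_t, Y_t, r_t, P_t; d) \in \mathcal{Z}$ the infimum in $A\phi$ is attained by hypothesis, so the HJB drift vanishes along the trajectory and the preceding inequality becomes an equality. This identifies $\hat Z(d)$ as optimal and yields $\xi^i = Y_T^{i, \hat Z^i(d), d}$ for $i \in \{p,q\}$ as the optimal rebate policy.

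\textbf{Main obstacle.} The principal technical difficulty is the upgrade of $M$ from a local to a genuine martingale, which is what allows expectations to be exchanged with the stochastic integrals. The polynomial growth assumption on $\phi$ reduces this to uniform-in-$t$ moment bounds on $(X, Y, r, P)$ under $\mathbb{P}^{\hat\alpha(Z,d)}$. Boundedness of $r$, $\mu$ and $\lambda$ handles $P$ and the intensities, while the bookkeeping for $X^4, X^6$ (driven by $P^p dN^p$ and $P^q dN^q$) and for $Y$ is the same delicate Cauchy-Schwarz plus Poisson/Brownian moment argument already carried out in Part (a) of Lemma \ref{lemma:trader}; the only subtlety is that the definition of $\mathcal{Z}$ only forces $\int_0^T \|Z_s\|^2 ds < \infty$ almost surely, so the verification is first done against a localizing sequence of stopping times and then passed to the limit using the admissibility conditions inherited from $\mathcal{Z}$.
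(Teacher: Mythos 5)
Your proposal is correct and follows essentially the same route as the paper's proof: Itô's formula under $\mathbb{P}^{\hat\alpha(Z,d)}$ combined with a localization by stopping times, the HJB inequality to obtain $\phi(0,\cdot)$ as a lower bound with equality at $\hat Z(d)$, and the polynomial growth condition together with the moment estimates from Part (a) of Lemma \ref{lemma:trader} to pass to the limit by dominated convergence. Your identification of the local-to-true martingale upgrade as the main technical obstacle, handled exactly through the admissibility of $\mathcal{Z}$ and the localizing sequence, matches the paper's argument.
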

\begin{proof}
\label{proof:verification}
   The proof is based on a classical verification theorem based on a localisation procedure and by applying Ito's formula. Define a local time $\theta_n =T \wedge \inf\{ t>0:||X_t|| >n, ||Y_t|| >n ,|r_t| >n, |P_t| >n \}$. By applying Ito's formula between $0$ and $\theta_n$, we get for any control $Z\in \mathcal Z$ \begin{align}
    \begin{split}
    &\phi(\theta_n,X_{\theta_n},Y^{Z,d}_{\theta_n}, r_{\theta_n},P_{\theta_n}) \\
    = &\phi(0,X_0,Y_0,r_0,P_0) + \int_0^{\theta_n}[ \frac{\partial \phi}{\partial s} (s, X_s, Y^{Z,d}_s,r_s,P_s) +\mathcal L\phi (s, X_s, Y^{Z,d}_s,r_s,P_s; Z_s, d)]ds\\
    &+ \int_0^{\theta_n} [\phi(s,X_{s-}+[r^1_s , 0,0, 0,0,0,0],Y^{Z,d}_{s^-} +[Z^{1,p}_s,Z^{1,q}_s],r_{s-},P_{s-})  - \phi(s,X_{s-},Y^{Z,d}_{s-},r_{s-},P_{s-})]  d\tilde N_s^a\\
    &+ \int_0^{\theta_n} [\phi(s,X_{s-}+[0 , r^2_s,0, 0,0,0,0],Y^{Z,d}_{s^-} +[Z^{2,p}_s,Z^{2,q}_s],r_{s-},P_{s-})  - \phi(s,X_{s-},Y_{s-},r_{s-},P_{s-})]  d\tilde  N_s^b\\
    &+ \int_0^{\theta_n} [\phi(s,X_{s-}+[0 , 0,0, P_s^p,1,0,0],Y^{Z,d}_{s^-} +[Z^{4,p}_s,Z^{4,q}_s],r_{s-},P_{s-})  - \phi(s,X_{s-},Y_{s-},r_{s-},P_{s-})]  d\tilde  N_s^p\\
    &+ \int_0^{\theta_n} [\phi(s,X_{s-}+[0 , 0,0, 0,0,P_s^q,1],Y^{Z,d}_{s^-} +[Z^{6,p}_s,Z^{6,q}_s],r_{s-},P_{s-})  - \phi(s,X_{s-},Y^{Z,d}_{s-},r_{s-},P_{s-})]  d\tilde  N_s^q,
    \end{split}
    \label{eq:phi_ito}
\end{align}
where $\tilde N^i_t:=N^i_t-\int_0^t\lambda^i_sds$ for any $i=a,b$ and $\tilde N^j_t:=N^j_t-\int_0^t\hat\lambda^j_tdt$ for $j=p,q$ are $\mathbb P^{\hat\alpha}-$martingales.

By the continuity of $\phi$ and its derivative and by the boundedness of all the state variables $X,Y,r,P$ on $[0,\theta_n]$, we deduce that 
\begin{align*}
&\mathbb{E}^{\hat\alpha} [\phi(\theta_n,X_{\theta_n},Y^{Z,d}_{\theta_n}, r_{\theta_n},P_{\theta_n})]\\
&= \mathbb{E}^{\hat\alpha} \Big[\phi(0,X_0,Y_0,r_0,P_0) + \int_0^{\theta_n}  \Bigr\{ \frac{\partial \phi}{\partial s} (s, X_s, Y^{Z,d}_s,r_s,P_s) ds  +\mathcal L\phi (s, X_s, Y^{Z,d}_s, r_s, P_s; Z_s) \Bigr\} ds \Big] \\
&\geq \phi(0,X_0,Y_0,r_0,P_0), \text{ for any } Z\in \mathcal Z,\end{align*} where equality is attained at $Z=\hat Z$ by the definition of $\phi$ solving \eqref{eq:zsolution}.

Note that
\begin{align*}
|\phi(\theta_n,X_{\theta_n},Y^{Z,d}_{\theta_n}, r_{\theta_n},P_{\theta_n})| &< C(1+ |X_{\theta_n}|^{2}+|Y_{\theta_n}|+|r_{\theta_n}|^{n_1}+|P_{\theta_n}|^{n_2})\\
&\leq \sup_{t \in [0,T]} C(1+ |X_t|^{2}+|Y^{Z,d}_t|+|r_t|^{n_1}+|P_t|^{n_2}).\end{align*} For $i = p,q$, we have \[ \sup_{t \in [0,T]} |P^i_t|^{n_2} \leq c
_{n_2} \Big((T\mu_\infty)^{n_2}+  \sup_{t\in [0,T]} |\sigma \tilde W^i_t|^{n_2}\Big) ,\] for some constant $c_{n_2}>0$ independent of $n$. Recall now from the proof of Lemma \ref{lemma:trader} that $\mathbb{E}^{\hat\alpha} \sup_{t \in [0,T]} |X_t|^{2} < \infty$. As $Z \in \mathcal{Z}$, we also have $\mathbb{E}^{\hat\alpha} \sup_{t \in [0,T]} |Y_t| < \infty$. We then deduce from the dominated convergence theorem that  \begin{align*}
&\mathbb{E}^{\hat\alpha} [|P^{cl}_T - P_T^{*}|^2 + Y_T \cdot I_2 - d(X_T^1 + X_T^2)]= \mathbb{E}^{\hat\alpha} \phi(T,X_T,Y_T, r_T,P_T;d)\\
&= \lim_{n \to \infty} \mathbb{E}^{\hat\alpha} \phi(\theta_n,X_{\theta_n},Y_{\theta_n}, r_{\theta_n},P_{\theta_n};d) \geq \phi(0,X_0,Y_0,r_0,P_0;d),\end{align*} for any $Z\in \mathcal Z$ with equality for $Z=\hat Z(d)$. Then $\rho(d) = \phi(0,X_0,Y_0,r_0,P_0;d)$ and $\xi=Y_T^{\hat Z(d),d}$ is the optimal rebate scheme.
\end{proof}
We now turn to a reduction of dimension technique due to the structure of our problem. Note that the dynamic version of the exchange value function is
\begin{equation*}
        V^P(t, x, y, p, r; d) := \inf_{Y^{p}_0 \ge R^p_a, \, Y^{q}_0 \ge R^q_a} \inf_{Z\in \mathcal Z} \mathbb{E}^{\alpha(Z,d)} \left[ \left| P^{cl}_T - P^*_T \right|^2 + Y_T^{p,0,Z^p}+Y_T^{q,0,Z^q} - d (X^1_T + X^2_T) \mid \mathcal{C} \right],
    \end{equation*}
    where $\mathcal{C} = \{X_t = x, Y_t = y, P^*_t = p, R_t = r, d\}$. We note that the payoff of the exchange is linear with respect to the $Y$ components, suggesting a reduction of dimension. We introduce the reduced infinitesimal generator $\tilde{\mathcal L}$ for any $\varphi\in \mathcal C'([0,T],\mathbb R^7\times \mathbb R^2\times \mathbb R^2\times  \mathbb R_+) $ and any $z\in \mathbb R^{7\times 2}$.
\begin{align*}
    &\tilde{\mathcal L}^z\varphi(t,x,r,p;d)\\
    &:= \sum_{i = p,q}  \Bigl\{\lambda_0(z^{1,i} + z^{2,i} +z^{4,i} +z^{6,i} ) - F^i(x,z,r,p, \hat \alpha^j_s,d;\hat \alpha^i_s) -  z^{5,i} \frac{\hat\mu^p_t}{\sigma} - z^{7,i} \frac{\hat\mu^q_t}{\sigma} \Bigl\} \\
    &+\frac{1}{2} \frac{\partial^2 \varphi}{\partial x_3 \partial x_3} (t, x,r,p) \sigma^2  + \sum_{i=p,q} \frac{\partial \varphi}{\partial p_i} (t, x,r,p)\hat\mu_t^i +\frac{1}{2}\sum_{i =p,q}  \frac{\partial^2 \phi}{\partial p_i \partial p_i} (t, x,r,p) \sigma^2 \\
    &+  [\varphi(t,x+[r_1 , 0,0, 0,0,0,0],r,p)  - \varphi(t, x,r,p) -z^{1,p}-z^{1,q}]  \lambda^a_t \\
    &+ [\varphi(t,x+[0 , r_2,0, 0,0,0,0],r,p)  - \varphi(t, x,r,p) -z^{2,p}-z^{2,q}]  \lambda^b_t\\
    &+ [\varphi(t,x+[0 , 0,0, p^p,1,0,0],r,p)  - \varphi(t, x,r,p) -z^{4,p}-z^{4,q}] \hat\lambda^p_t\\
    &+ [\varphi(t,x+[0 , 0,0, 0,0,p^q,1],r,p)  - \varphi(t, x,r,p) -z^{6,p}-z^{6,q}] \hat\lambda^q_t.
\end{align*}
We thus have the following result from direct computations. 
\begin{proposition}
    Let $\varphi$ be a $C'([0,T],\mathbb R^7\times  \mathbb R^2\times \mathbb R^2\times \mathbb R_+)$ solution to the following integro-partial HJB equation: 
       \begin{equation}
            \left\{
            \begin{aligned}
                &\frac{\partial \varphi}{\partial t} + \inf_{z\in \mathbb R^{7\times 2}} \tilde{\mathcal L}^z\varphi(t,x,r,p;d) = 0,\text{ }t<T, \\
                &\varphi(T,x,r,p;d) = \left(\frac{x_1 - x_2 + K^p x_4 + K^q x_6 + K^0 P_0^*}{K^p x_5 + K^q x_7 + K^0} - x_3\right)^2 - d(x_1 + x_2),\; (x,r,p)\in \mathcal S;
            \end{aligned}
            \right.
            \label{eq:zsolution:reduce} 
            \end{equation}
            Then, there exists a solution to \eqref{eq:zsolution} given by
            \[  \phi(t,x,y,r,p;d)=\varphi(t,x,r,p;d)+y\cdot I_2. \]
\end{proposition}

\subsection{Interpretations and key financial insights}
\label{sec:interpretresult}
We now turn to the financial interpretation of our results, with a focus on the structure of the optimal rebates and the corresponding optimal controls of the strategic traders. 

\paragraph{Optimal rebate policy.} 
Recall that the optimal rebate structure is given by:
\begin{align*}
    \xi^i =& R^i_0 + \int_0^T \hat{Z}^{1,i}_s (dN^{a}_s - \lambda_0 ds) + \int_0^T \hat{Z}^{2,i}_s (dN^{b}_s - \lambda_0 ds) + \int_0^T \hat{Z}^{3,i}_s dW_s \\
    &+ \int_0^T \hat{Z}^{4,i}_s (dN^p_s - \lambda_0 ds) + \int_0^T \hat{Z}^{5,i}_s dW^{p}_s + \int_0^T \hat{Z}^{6,i}_s (dN^{q}_s - \lambda_0 ds) \\
    &+ \int_0^T \hat{Z}^{7,i}_s dW^{q}_s - \int_0^T F^i(X_s,\hat{Z}_s,r_s,P_s,d) ds, \quad i \in \{p,q\},
\end{align*}
where, for simplicity of notation, we set $\hat{Z} = \hat{Z}(d)$. 

Each strategic trader receives an incentive indexed to the canonical process $X$, penalized by the continuous-time drift $F^i$ representing the certainty equivalent of their utility. For each strategic trader $i \in \{p,q\}$, the terms can be interpreted as follows:
\begin{itemize}
    \item $\hat{Z}^{1,i}$ and $\hat{Z}^{2,i}$ represent the incentives tied to the arrival of non-strategic buyers and sellers in the auction, respectively;
    \item $\hat{Z}^{3,i}$ denotes the incentive corresponding to fluctuations in the fundamental price $P^\star$;
    \item $\hat{Z}^{4,i}$ and $\hat{Z}^{6,i}$ are the incentives linked to the successful order placements by strategic traders $p$ and $q$, respectively;
    \item $\hat{Z}^{5,i}$ and $\hat{Z}^{7,i}$ are the incentives indexing the anchor prices of each strategic trader via the Brownian motions $W_s^p$ and $W_s^q$, respectively;
    \item the continuous time component $F^i$ is the inner gain of strategic trader $i$ resulting from their optimization (the certainty equivalent) that the exchange redistributes among the agents. 
\end{itemize}

The $F^i$ component is responsive to each strategic trader's gains and losses throughout the auction. Recall that a strategic trader's payoff from the market state at time $t$ is $g^i(X_t)$. When a new order enters the market at time $t$, the net gain for trader $i$ is $g^i(X_t) - g^i(X_{t^-})$. We say that the trader incurs a loss if this net gain is negative, and profits if it is positive. Observing the structure of $F^i$, the incentive $\xi^i$ has a positive increment when a strategic trader experiences a loss from a newly arrived order, and a negative increment if they profit from it. In short, the incentive $\xi^i$ compensates each strategic trader when adverse market movements harm them, and charges them when market changes benefit them. Additionally, the $F^i$ component depends on the fee $d$ charged to non-strategic traders. This indicates that the strategic traders are influenced not only by incentives directed explicitly at them, but also by the fee structure imposed on other market participants.

Given this rebate policy, we now interpret the strategic traders' equilibrium controls $\hat{\alpha} = (\hat{\mu}^p, \hat{\lambda}^p ; \hat{\mu}^q, \hat{\lambda}^q)$. 

\paragraph{Optimal trading speed in the auctions.} 
The optimal arrival intensity for strategic trader $p$ is $\hat{\lambda}^p(X_s,\hat{Z}_s,P_s^p) = 0 \vee \{\lambda_0 + \hat{Z}_s^{4,p} + g^p(X_s + P_s^p\mathbf{e}_4 + \mathbf{e}_5) - g^p(X_s)\} \wedge \lambda_{\infty}$, and for strategic trader $q$ is $\hat{\lambda}^q(X_s,\hat{Z}_s,P_s^q) = 0 \vee \{\lambda_0 + \hat{Z}_s^{6,q} + g^q(X_s + P_s^q\mathbf{e}_6 + \mathbf{e}_7) - g^q(X_s)\} \wedge \lambda_{\infty}$. This implies that the strategic traders anchor at the baseline intensity $\lambda_0$ and dynamically adjust their speeds. Traders submit orders more frequently if they stand to gain from their arrivals, i.e., when $g^p(X_s + P_s^p\mathbf{e}_4 + \mathbf{e}_5) - g^p(X_s) > 0$. Furthermore, when the exchange rewards a trader's arrival (i.e., when $\hat{Z}^{4,p}_s$ or $\hat{Z}^{6,q}_s$ are positive), the trader increases their intensity. Conversely, if the exchange penalizes arrivals, they slow down their trading activities. This mechanism allows the exchange to monitor and regulate trader activity.

\paragraph{Optimal spread.}
To illustrate the optimal spread, we consider the specific example where the non-strategic impact functions are $i^a(\mu^p, \mu^q) = e^{-c(\mu^p + \mu^q)}$ and $i^b(\mu^p, \mu^q) = e^{c(\mu^p + \mu^q)}$ for some constant $c>0$, assuming the symmetric strategic trader case investigated in Remark \ref{rem:Ucharacterization}. Let $z^{1,p} = z^{1,q} = z^1$, $z^{2,p} = z^{2,q} = z^2$, and $z^{5,p} = z^{7,q} = \tilde{z}$. Assume there exists a constant $\gamma > 0$ such that:
\begin{equation}\label{eq:gamma_nash}
    -\frac{\tilde{z}}{\sigma} - c\big(Dg^1(x) + z^1\big)\frac{1}{\gamma} + c\big(Dg^2(x) + z^2\big)\gamma = 0, \quad Dg^1(x) + z^1 < 0, \quad Dg^2(x) + z^2 < 0.
\end{equation}
Then, the optimal strategies satisfy:
\[ 
\begin{cases}
    \hat{\mu}^p + \hat{\mu}^q &= \frac{\ln(\gamma)}{c} \\[0.5em]
    \hat{\lambda}^p &= \big( \lambda_0 + g(x + P^p\mathbf{e}_4 + \mathbf{e}_5) - g(x) + z^{4,p} \big) \\[0.5em]
    \hat{\lambda}^q &= \big( \lambda_0 + g(x + P^q\mathbf{e}_6 + \mathbf{e}_7) - g(x) + z^{6,q} \big).
\end{cases}
\]
If such a $\gamma$ exists, any rebate contract $\xi = Y_T^{Y_0,Z}$ that induces a non-empty equilibrium set $\text{NEI}(\xi^{Y_0,Z},d)$ must satisfy this relation for $\{z^{1,p}, z^{2,p}, z^{5,p}, z^{1,q}, z^{2,q}, z^{7,q}\}$. Note that $\text{NEI}(\xi,d)$ will contain uncountably many pairs of $(\mu^p, \mu^q)$ such that $\mu^p + \mu^q = \ln(\gamma)/c$ for any valid $\gamma$. This demonstrates that it is the combined price impact of the two strategic traders that matters to the market and the exchange, rather than the individual proposed price of each trader.

\section{Numerical methods: Deep neural networks for high-dimensional integro-HJB equations}
\label{sec:neural}

Although Section \ref{sec:principalagent} implicitly identifies the optimal incentive structure $\xi$ for the strategic traders, it does not provide explicit solutions to the integro-partial differential equation considered in Theorem \ref{thm:hjb}. It remains to determine how an exchange can implement this incentive in practice and calculate the optimal fee $d$ for the non-strategic traders. To address this, we present a numerical method designed to explicitly compute the optimal incentive structure and the optimal fee $d$, leveraging the Deep BSDE method \cite{crandall1983viscosity,henry2017deep,han2020convergence,ji2020three,ji2022deep}. \vspace{0.5em}

Recall from Theorem \ref{thm:main} that finding the optimal incentive structure $\xi$ is equivalent to finding an optimal pair $(Y_0,Z)$. With the constraint \eqref{eq:ex_const} and Section \ref{section:saturated}, this simplifies to finding an optimal pair $(0,Z)$. Theorem \ref{thm:hjb} shows that the optimal $Z$ can be viewed as a function of $\{t,X_t, Y_t, r_t,P_t;d\}$. In order to validate our Deep BSDE methods used for the numeric, we first confirm that there exists a unique solution in a weak sense to the reduced PDE \eqref{eq:zsolution:reduce}. We use the notion of viscosity solution \cite{barles2008second,jing2013regularity}

\begin{definition}[Viscosity solution]
\label{def:viscosity}
Let $\mathcal{S}_{\text{red}} = \mathbb{R}^7 \times \mathbb{R}^2 \times \mathbb{R}^2$ denote the reduced state space, with elements denoted by $\Theta = (x,r,p)$. Let the terminal condition be denoted by \[g(\Theta; d) = \left(\frac{x_1 - x_2 + K^p x_4 + K^q x_6 + K^0 P_0^*}{K^p x_5 + K^q x_7 + K^0} - x_3\right)^2 - d(x_1 + x_2).\]

\begin{enumerate}
    \item An upper semicontinuous function $u: [0,T] \times \mathcal{S}_{\text{red}} \to \mathbb{R}$ is a viscosity subsolution to \eqref{eq:zsolution:reduce} if $u(T,\Theta) \leq g(\Theta; d)$ for all $\Theta \in \mathcal{S}_{\text{red}}$, and for any $(t_0, \Theta_0) \in [0,T) \times \mathcal{S}_{\text{red}}$ and any test function $\psi \in \mathcal{C}^{1,2}([0,T] \times \mathcal{S}_{\text{red}})$ with polynomial growth such that $u - \psi$ attains a global maximum at $(t_0, \Theta_0)$ with $u(t_0, \Theta_0) = \psi(t_0, \Theta_0)$, we have:
    \begin{equation*}
        -\frac{\partial \psi}{\partial t}(t_0, \Theta_0) - \inf_{z\in \mathbb{R}^{7\times 2}} \tilde{\mathcal L}\psi(t_0, \Theta_0; z, d) \leq 0.
    \end{equation*}

    \item A lower semicontinuous function $v: [0,T] \times \mathcal{S}_{\text{red}} \to \mathbb{R}$ is a viscosity supersolution to \eqref{eq:zsolution:reduce} if $v(T,\Theta) \geq g(\Theta; d)$ for all $\Theta \in \mathcal{S}_{\text{red}}$, and for any $(t_0, \Theta_0) \in [0,T) \times \mathcal{S}_{\text{red}}$ and any test function $\psi \in \mathcal{C}^{1,2}([0,T] \times \mathcal{S}_{\text{red}})$ with polynomial growth such that $v - \psi$ attains a global minimum at $(t_0, \Theta_0)$ with $v(t_0, \Theta_0) = \psi(t_0, \Theta_0)$, we have:
    \begin{equation*}
        -\frac{\partial \psi}{\partial t}(t_0, \Theta_0) - \inf_{z\in \mathbb{R}^{7\times 2}} \tilde{\mathcal L}^z\psi(t_0, \Theta_0; z, d) \geq 0.
    \end{equation*}

    \item A continuous function $\varphi: [0,T] \times \mathcal{S}_{\text{red}} \to \mathbb{R}$ is a viscosity solution to \eqref{eq:zsolution:reduce} if it is both a viscosity subsolution and a viscosity supersolution.
\end{enumerate}
\end{definition}
Motivated by numerical schemes, we will restrict our study of this HJB-PDE on a bounded domain of $\mathcal{S}_{\text{red}} = \mathbb{R}^7 \times \mathbb{R}^2 \times \mathbb{R}^2$ denoted by $\mathcal S_m$, where each variables are bounded uniformly by a constant $m>0$.

\begin{lemma}
    For any $\Theta\in \mathcal S_m, (\mu,\lambda)\in B\times \Lambda$ and bounded fees $d\in [0,d_\infty]$, the functions $g, F^i$ are uniformly Lipschitz in $\Theta,d$.
\end{lemma}
\begin{proof}
    The proof is a direct consequence of the boundedness of $\mathcal S_m,B,\Lambda$ together with Remark \ref{rem:existencecp}.
\end{proof}

In order to guarantee the existence and uniqueness of a viscosity solution to \eqref{eq:zsolution:reduce}, we need to enforce the following assumption.\newline

\textbf{(L)} The optimizers $\hat\alpha^p,\hat\alpha^q$ given by Theorem \ref{thm:main} are uniformly Lipschitz in $\Theta,d$.\newline

This assumption is in particular satisfies for symmetric Auction Makers, with the symmetric Nash equilibrium $\hat\mu^p=\hat\mu^q=\frac{\ln(\gamma)}{c}$, see Remark \ref{rem:Ucharacterization} and Corollary \ref{cor:Nash}. As a direct consequence together with for exmaple \cite{jing2013regularity} we have the following theorem.

\begin{theorem}[Existence and Uniqueness of viscosity solutions]
    Under Assumption \textbf{(L)} there exists a unique viscosity solution to PDE \eqref{eq:zsolution:reduce}. 
\end{theorem}

Following \cite{dayanikli2025machine}, we use a neural network to approximate the function $Z(t,x,y,r,p;d)$, defining the network’s loss function as the stock exchange’s objective  in \eqref{eq:exchange_incentive} and \eqref{eq:ex_const}. The reason a neural network can be used to learn a function is supported by the Universal Approximation Theorem \cite{hornik1989multilayer}, which states that a neural network with sufficient layers and appropriate activation functions can approximate any $C^k$ functions effectively.


\subsection{Algorithm and Parameter Calibration}

For a given time interval $[0, T]$, we partition it into $n$ sub-intervals: $0 = t_0, t_1, \dots, t_n = T$ with $\Delta t = \frac{T}{n}$. Let the neural network model's input be $(r, x, y, p)$ and the output be $\varphi(t, x, y, p)$. Denote the sample size (batch size) to be $M$. Recall the problem of the exchange in \eqref{opt:tilderho}, \eqref{eq:ex_const}, and \eqref{eq:exchange_incentive}, we define the  network’s objective as: $\min_{d,Z} \bigr\{\rho(Y_T^{0,Z,d},d,\hat\alpha(Y_T^{0,Z,d},d)) \bigr\}$, for some constant $\epsilon>0$. Using the definition of $V_0$ in \eqref{eq:traderproblem}, we can further format this objective as \[\min_{d,Z} \bigr\{\mathbb{E}^{\hat \alpha}\big[|P^{cl}_T - P_T^{*}|^2 - d(X_T^1 + X_T^2)+ Y_T^{0,Z,d} \cdot I_2\big]  \bigr\}.\]

The algorithm \ref{alg:neural} for computing the loss function is provided below:

\begin{algorithm} 
\caption{Loss Function of the Neural Network}
\begin{algorithmic}[1] 
    \STATE Initialize $X_0, Y_0, P_0$
    \FOR{$t = 0$ to $n-1$}
        \STATE Compute $Z_t = 20\times \varphi(t,X_t,Y_t,P_t)$
        \STATE Find a Nash equilibrium $\hat \alpha_t$
        \STATE Compute $X_{t+1}, Y_{t+1}, P_{t+1}$ from $X_{t}, Y_{t}, P_{t}$  
    \ENDFOR
    \STATE Compute $P_T^{cl}$, $P_T^{*}$, $g^p(X_T)$, and $g^q(X_T)$ from $X_{T}, Y_{T}, P_{T}$
    \STATE Loss $=  \Big[\frac{1}{M}\sum_{k=1}^M \Big(|P^{cl}_T - P_T^{*}|^2 + Y_T^{p,Z,d} + Y_T^{q,Z,d} - d(X_T^1 + X_T^2)\Big)\Big] $
    \RETURN Loss
\end{algorithmic}\label{alg:neural}
\end{algorithm}

\begin{remark}
    Note that we multiply $20$ at $Z_t = 20\times \varphi(t,X_t,Y_t,P_t)$ to ensure that \eqref{eq:z_condition} is satisfied, thus a Nash Equilibrium control $(\alpha^p, \alpha^q)$ can be found.
\end{remark}

As for the parameters, we set $T = 10$. This value was chosen randomly but can be interpreted as the auction remaining open for 10 seconds or 10 minutes. Recall that non-strategic auction participants cancel their orders based on the values of $1_{A_i \leq \theta(T - \tau^a_i)}$ and $\mathbf 1_{B_i \leq \theta(T - \tau^b_i)}$; we take $\theta(T - \tau_i) = \frac{1}{1+T-\tau_i} + 0.5$.\vspace{0.5em}

We set $K = K_0 = K^p = K^q = 1$ assuming a \$1 spread between the clearing price and the order price would prompt traders to demand an additional share. We set $v^a =v^b =1$, $\lambda_0 = 100$ assuming a non-strategic trader submits an order for one share and the anchor arrival intensity is 100 orders/traders per unit of time. We set $c = 0.1$, considering the shape of $e^{cx}$ and selecting a value that reasonably reflects how sensitive the arrival intensity of non-strategic traders is to the spread offered by strategic traders. For the reservation parameter, we set $R_0 = [100- 35000,100- 35000]$. The \$100 says each strategic trader gains at least \$100 from the auction whereas $35000$ accounts for the distance between the strategic trader’s $\lambda$ and $\lambda_0$, thereby providing greater flexibility for the strategic trader.\vspace{0.5em}

We calibrate $P_0^*$ and $\sigma$ by using the trading data of ''Apple'' and ''Alphabet'' extracted from YahooFinance on period Oct-2-2023 to Dec-29-2023. The calibration method is illustrated in \cite[ Section 2.4]{mastrolia2024clearing}. We test two sets of parameters: we set $P_0^* = 184.39$, $\sigma = 1.76$ for Apple, and $P_0^* = 134.24$, $\sigma = 2.11$ for Alphabet. We set $\lambda_{\infty} = 200$ and $\mu_{\infty} = 60$; both are large enough such that the program would not touch the bound when searching for the optimal controls. We vary $\epsilon \in [0.5,3]$.  \vspace{0.5em}

As discussed in Section \ref{sec:interpretresult}, $NE(\xi,d)$ may contain multiple pairs of optimal actions $\alpha = (\alpha^p, \alpha^q)$. More specifically, if there exists a positive $\gamma$ that satisfies equation \eqref{eq:gamma_nash}, then for any $(\mu^p, \mu^q)$ such that $\mu^p + \mu^q = \frac{\ln \gamma}{c}$, we have  $(\mu^p, \mu^q) \in NE(\xi,d) $. To select a unique pair $(\mu^p, \mu^q)$ for our algorithm, we fix $(\mu^p, \mu^q) = (0.5*\frac{\ln \gamma}{c}, 0.5*\frac{\ln \gamma}{c})$. Based on our initial tests which selected the pair that minimizes the loss of the exchange, recalling the definition of $NEI(\xi,d)$, from $\frac{\ln \gamma}{c}* (-1, 2)$, $\frac{\ln \gamma}{c}* (-0.5, 1.5)$, $\frac{\ln \gamma}{c}* (0, 1)$, $\frac{\ln \gamma}{c}* (0.5, 0.5)$, and $\frac{\ln \gamma}{c}* (1, 0)$, we found that the pair $(0.5*\frac{\ln \gamma}{c}, 0.5*\frac{\ln \gamma}{c})$ is consistently selected. Thus in consideration of the computation cost we fix $(\mu^p, \mu^q) = (0.5*\frac{\ln \gamma}{c}, 0.5*\frac{\ln \gamma}{c})$.  \vspace{0.5em}

Based on the choice of $(\mu^p, \mu^q)$ discussed above and the discussion of symmetric strategic traders in Remark \ref{rem:Ucharacterization}, we further simplify the numerical method to assume that the exchange sends identical incentive contract to strategic trader $p$ and strategic trader $q$, i.e. $Z^{1,p} = Z^{1,q}$, $Z^{2,p} = Z^{2,q}$, $Z^{3,p} = Z^{3,q}$, $Z^{4,p} = Z^{6,q}$, $Z^{5,p} = Z^{7,q}$, $Z^{6,p} = Z^{4,q}$, $Z^{7,p} = Z^{5,q}$. \vspace{0.5em}

For the numerical method, we set $n =50$, $M = 256$ to balance precision and computation cost. The model uses $2$ hidden layers, with $8$ neurons in the first hidden layer and $7$ neurons in the second hidden layer. Note that the number of neurons in the second hidden layer corresponds to the output dimension. Activation function is set to be the sigmoid function for all layers (including the output layer) except the input layer. We tested other activation functions, such as the identity function, ReLU, and tanh. However, the sigmoid function proved to be the most stable, as it supports the existence of Nash equilibrium controls. We employ Adam optimization with automatic differentiation, using a learning rate chosen from $\{0.001,0.0005,0.0001,0.00005,0.00001\}$. \vspace{0.5em}

\subsection{Illustration and Policy Recommendations}

In terms of the optimal $d$ which minimizes $\rho(d)$, we find $\hat d = 3.0$ for Apple and $\hat d = 2.0$ for Alphabet. Figure \ref{fig:result1} shows that as the transaction fee $d$ initially increases, $\rho(d)$ decreases, likely due to increased fee revenues. However, beyond the optimal point, further increases in $d$ transaction fee cease to be beneficial probably because higher fees discourage non-strategic traders from participating in the auction, ultimately reducing the exchange's overall gains.\vspace{0.5em}

\begin{figure}[htbp]
    \begin{minipage}{0.45\textwidth}
        \centering
        \includegraphics[width=\linewidth]{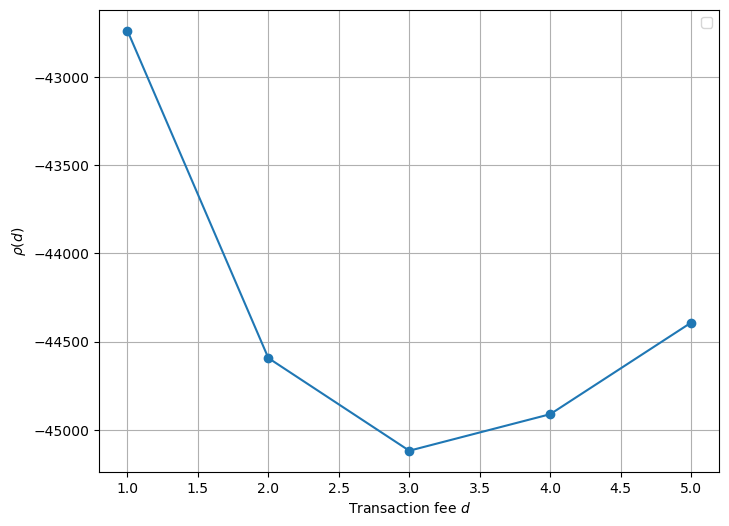}
        \caption*{(a) Apple}
    \end{minipage}
    \begin{minipage}{0.45\textwidth}
        \centering
        \includegraphics[width=\linewidth]{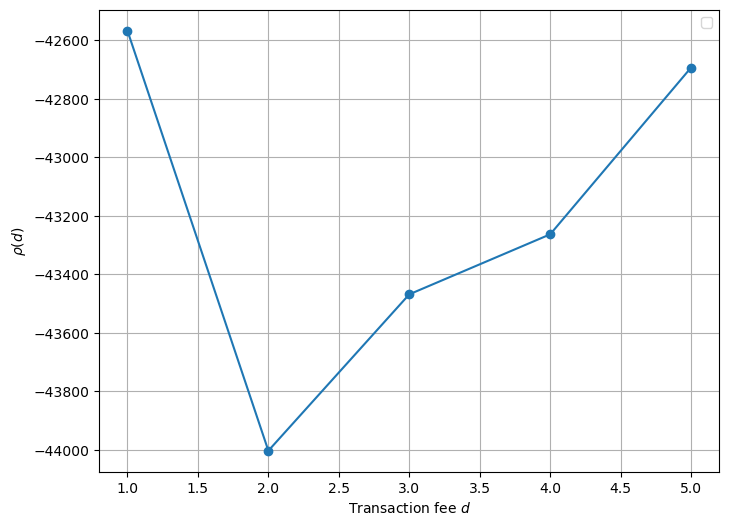}
        \caption*{(b) Alphabet}
    \end{minipage}
    \caption{Clearing-efficient spread $\rho(d)$ as a function of the transaction fee $d$}
    \label{fig:result1}
\end{figure}

We then want to illustrate $\hat \xi(\hat d)=Y_T^{R_0,Z}$. Due to similarity between Apple and Alphabet's result, we only present Apple's result here.\vspace{0.5em}

Figure \ref{fig:result2}(a) illustrates the optimal compensation indexes by the different state variables. We first observe that all these compensations are negative, advocating for the snipping monitoring and incentivizing the strategic actors to arrive as early as possible. Then, $Z^1_t$, $Z^2_t$ remains relatively stable over time $t$, so the exchange can treat these two variables as a constant vector across $t$ for simplicity. Our primary focus is on $Z^4_t$ and $Z^6_t$, which influence the traders' arrival intensities over time. A specific feature of an auction market in contrast to a continuous trading session is that orders are not executed immediately. This delay allows traders to select their arrival time or intensity during the auction to optimize their gains. For instance, traders may choose to arrive late to gain more market information to profit from the early-arriving orders. Early arriving traders can also benefit from the market by placing misleading orders or setting a tone at the beginning of the market. Consequently, the component related to arrival intensity of the incentive becomes important as it can help mitigate the timing advantages present in an auction market. We now discuss the shapes of $Z^4_t$ and $Z^6_t$. From Figure \ref{fig:result2}(a), we see that they increase over a brief period and then gradually decrease over time. This indicates that the exchange does not want the traders to arrive too early but also seeks to avoid overly late arrivals. Due to this incentive structure, we see that the strategic traders reduce their arrival intensity over time, sending fewer orders as the auction progresses. This aligns with \cite{mastrolia2024clearing}, which indicate that the exchange prefers traders to arrive earlier in the auction rather than later and preventing snipping behavior. \vspace{0.5em}

Figures \ref{fig:result2} (b) and (c) show the strategic traders' responses to these incentives, while Figure \ref{fig:result2} (d) shows the arriving intensities of the non-strategic buyers and sellers. It is important to note that the difference between $\lambda^p$ and $\lambda^q$, and between the continuous incentive $F^i(X_s,Z_s,r_s,P_s, \hat \alpha^j_s,d;\hat \alpha^i_s)$, shown in Figure \ref{fig:result2} (b), (e), (f) and (g), are numerical artifact. Theoretically, these values should be identical; however due to the imperfection in generating Brownian motions using a computer program, discrepancies arise. In particular, the value $Y_0$ provides a rebate to the strategic trader while the negativity of the $Z$ components mitigate any snipping effect, as we were expected. \vspace{0.5em}
Figure \ref{fig:result2} (e) displays the average optimal $-\int_0^t F^i(X_s,Z_s,r_s,P_s, \hat \alpha^j_s,d;\hat \alpha^i_s) ds$, where the blue solid line represents the continuous incentive for trader $i=p$, and the orange dashed line represents the incentive for trader $i=q$.\vspace{0.5em}

Figure \ref{fig:result2} (f) displays the average optimal $F^i(X_s,Z_s,r_s,P_s, \hat \alpha^j_s,d;\hat \alpha^i_s)$. Figure \ref{fig:result2} (g) displays the total rebate $Y^{0,Z,\hat d}_t$, with $\xi (\hat d) = Y^{0,Z,\hat d}_t$. Figure \ref{fig:result2} (a) shows $Z^{1},Z^{2}, ..., Z^{7}$. 



\begin{figure}[htbp]
    \centering
    \begin{minipage}{0.43\textwidth}
        \centering
        \includegraphics[width=\linewidth]{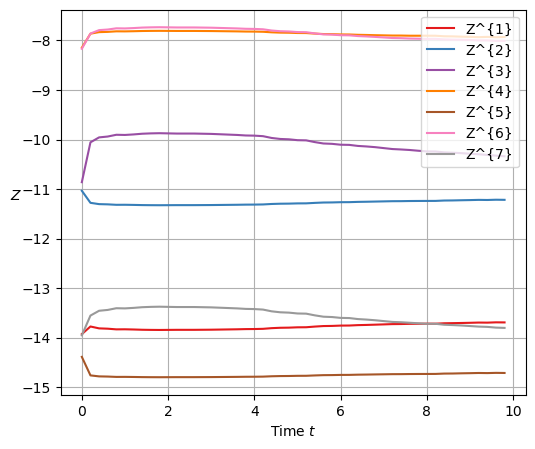}
        \caption*{(a) Selected $Z$}
    \end{minipage}
    \begin{minipage}{0.43\textwidth}
        \centering
        \includegraphics[width=\linewidth]{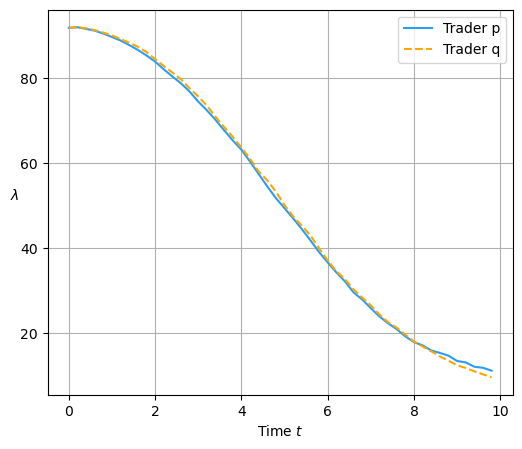}
        \caption*{(b) Traders' optimal arriving intensities, $\hat{\lambda}$}
    \end{minipage}
    
    \begin{minipage}{0.43\textwidth}
        \centering
        \includegraphics[width=\linewidth]{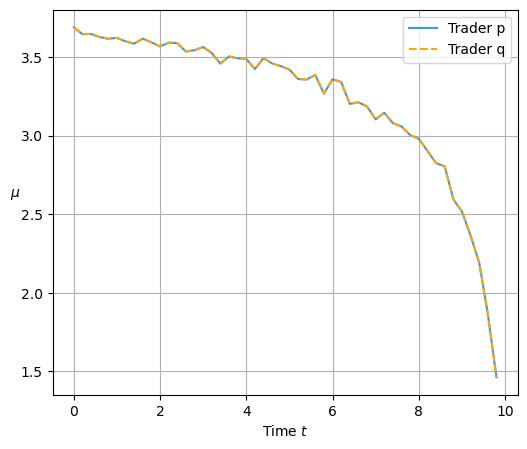}
        \caption*{(c) Traders' optimal spreads, $\hat{\mu}$}
    \end{minipage}
    \begin{minipage}{0.43\textwidth}
        \centering
        \includegraphics[width=\linewidth]{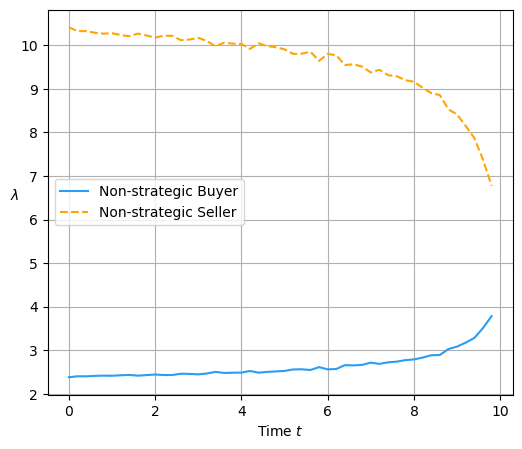}
        \caption*{(d) The value of $\lambda^a$ and $\lambda^b$ }
    \end{minipage}
    
    \begin{minipage}{0.45\textwidth}
        \centering
         \includegraphics[width=\linewidth]{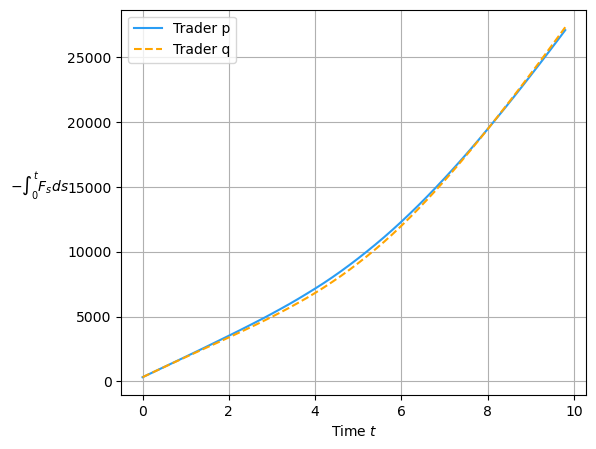}
        \caption*{(e) Exchange's optimal incentive -$\int_0^t F_s ds$ }
    \end{minipage}
    \begin{minipage}{0.45\textwidth}
        \centering
         \includegraphics[width=\linewidth]{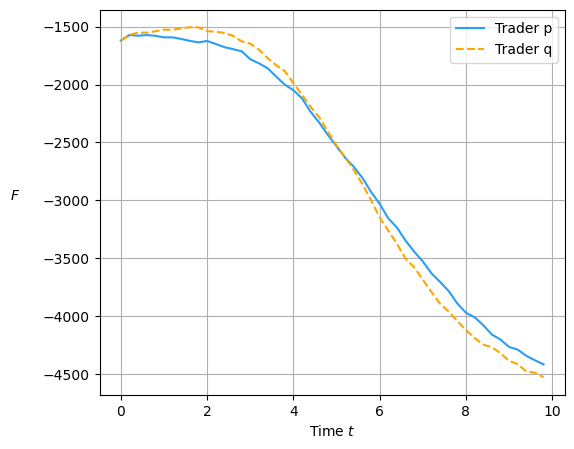}
        \caption*{(f) Exchange's optimal $F$ }
    \end{minipage}
\end{figure}

\begin{figure}[htbp]
    \centering
    \begin{minipage}{0.45\textwidth}
        \centering
             \includegraphics[width=\linewidth]{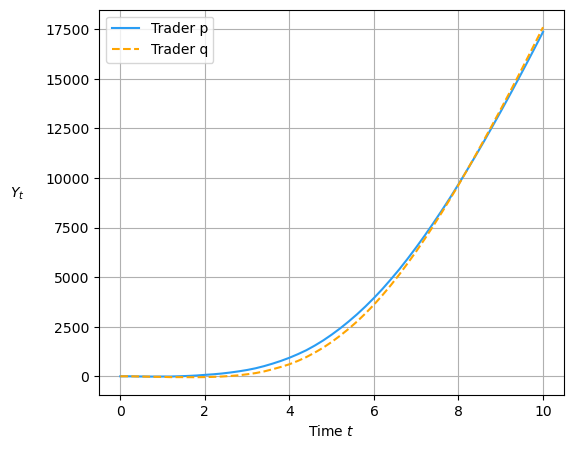}
        \caption*{(g) Continuation utilities $Y^p_t,Y_t^q$. }
    \end{minipage}

    \caption{Result for Apple. Each graph shows the  mean of 256 sample paths.}
    \label{fig:result2}
\end{figure}

Additionally, we see that the fee is effective in regulating other behaviors of strategic and non-strategic traders throughout the auction. The strategic traders' spreads, $\mu^p$ and $\mu^q$, decrease to zero as time advances. This indicates that the strategic traders are submitting orders that align more closely with the efficient price, rather than introducing misleading or disruptive pricing signals. Furthermore, the imbalance between the arriving intensity of non-strategic buyers and non-strategic sellers diminishes due to the closing gap between the strategic traders' price and the efficient price. With regard to the incentive structure, we observe from figure \ref{fig:result2} (e) and (f) that the time-dependent component of the incentives and its derivatives increase over time. Figure \ref{fig:result2} (g) indicates that the total incentive also increases over time. This pattern may be attributed to strategic traders becoming less inclined to participate and earning less as the auction progresses, so the exchange compensates them by enhancing the incentives. The insight we gain here can be a potential strategy for the exchange to design incentives. Specifically, an effective incentive could be formulated as an inverse function relative to the traders' gains or the traders' willingness to participate. By shaping the incentive structure in this way, the exchange can effectively regulate the auction. \vspace{0.5em}


Lastly, we demonstrate the effectiveness of the fee and incentive in improving both market efficiency and the exchange's outcomes by comparing two scenarios: one with no fees or incentives and one with the optimal fee and incentive $(\hat{d}, \hat{\xi}(\hat{d}))$. When $(d = 0, \xi = 0)$, the exchange's value is $\rho = 454$ for Apple and $\rho = 487$ for Alphabet, whereas with $(\hat{d}, \hat{\xi}(\hat{d}))$, it improves significantly to $\rho(\hat{d}) = -45119$ for Apple and $\rho(\hat{d}) = -44003$ for Alphabet. This demonstrates that the fee and incentive substantially enhance the exchange’s outcomes. In terms of market efficiency, when $(d = 0, \xi = 0)$, the expected market spread is $\mathbb{E}^{\alpha(0,0)}\Big[|P^{cl}_T - P_T^{}|^2\Big] = 454$ for Apple and $487$ for Alphabet, compared to $\mathbb{E}^{\alpha(\hat{\xi}, \hat{d})}\Big[|P^{cl}_T - P_T^{}|^2\Big] = 144$ for Apple and $127$ for Alphabet under the optimal fee and incentive structure. Therefore, we recommend implementing the fee and the incentive to promote both the welfare of the exchange and the overall efficiency of the market.


%
%


  \newpage 
  \small
  \bibliographystyle{apalike}
\bibliography{ref}

\appendix

\section{Proof of Lemma \ref{thm:girsanoc}}\label{app:girsanov}

\begin{proof}
\textit{Proof of i.} This is a direct consequence of \cite[Proposition 2.10]{bjork2021point} or \cite[Theorem 1.36]{oksendal2019stochastic}.

\textit{Proof of ii.} By Definition 4.1 in \cite{bjork2021point}, it is sufficient to show that for every non-negative predictable process $h$ and every $i \in \{1, \dots, k\}$, $\mathbb{E}^\mathbb{Q}\left[ \int_0^T h_t\lambda_t^i(1+\phi^i_t) dt \right] = \mathbb{E}^\mathbb{Q}\left[ \int_0^T h_t dN^i_t \right]$.

Define $Z_t := L_t Y_t$, where $Y_t := \int_0^t h_s dN^i_s$. By Itô's formula for semimartingales with jumps, 
\begin{align*}
    Z_t &= Z_0 + \int_0^t Y_{s^-} dL_s + \int_0^t L_{s^-} dY_s + [L, Y]_t \\
    &= \int_0^t Y_{s^-} dL_s + \int_0^t L_{s^-} h_s dN^i_s + \sum_{0<s\leq t} \Delta L_s \Delta Y_s \\
    &= \int_0^t Y_{s^-} dL_s + \int_0^t L_{s^-} h_s dN^i_s + \sum_{0<s\leq t} \left(L_{s^-} \sum_{j=1}^k \phi_s^j \Delta N_s^j\right) (h_s \Delta N^i_s).
\end{align*}

By part i, $L$ is a martingale, so $\mathbb{E}^\mathbb{P} \big[\int_0^T Y_{s^-} dL_s\big] = 0$. By the mutual independence of $N^1, \dots, N^k$ under $\mathbb{P}$, we have $\mathbb{E}^\mathbb{P} \big[\sum_{0<s\leq t} \Delta N_s^j \Delta N^i_s\big] = 0$ for $i \ne j$.
Taking the expectation under $\mathbb{P}$ at terminal time $T$ yields:
\begin{align*}
    \mathbb{E}^\mathbb{Q} \left[ \int_0^T h_t dN^i_t \right] &= \mathbb{E}^\mathbb{P} [Z_T] \\
    &= \mathbb{E}^\mathbb{P} \left[ \int_0^T L_{s^-} h_s dN^i_s + \sum_{0<s\leq T} L_{s^-}\phi_s^i h_s (\Delta N_s^i)^2 \right] \\
    &= \mathbb{E}^\mathbb{P} \left[ \int_0^T L_{s^-} h_s (1 +\phi_s^i) dN^i_s \right].
\end{align*}
Since the compensator of $N^i$ under $\mathbb{P}$ is $\lambda^i_s ds$, the predictable projection allows us to write this as $\mathbb{E}^\mathbb{P} \left[ \int_0^T L_{s^-} h_s (1 +\phi_s^i)\lambda^i_s ds \right]$, which precisely equals $\mathbb{E}^\mathbb{Q} \left[ \int_0^T h_s\lambda_s^i(1+\phi^i_s) ds \right]$.

\textit{Proof of iii.} Note that $[\tilde{W}^i,\tilde{W}^i]_t = [W^i,W^i]_t = t$. By Lévy's characterization theorem \cite[Theorem 4.6.4]{shreve2004stochastic}, it is sufficient to show that $\tilde{W}^i$ is a $\mathbb{Q}$-martingale. 
Fix $0 \leq s < t \leq T$. By Bayes' rule for conditional expectations \cite[Lemma 5.2.2]{shreve2004stochastic}, $\mathbb{E}^\mathbb{Q}[\tilde{W}^i_t|\mathcal{F}_s] = \frac{1}{L_s}\mathbb{E}^\mathbb{P}[\tilde{W}^i_t L_t|\mathcal{F}_s]$. To show that this equals $\tilde{W}^i_s$, it is sufficient to show that $\tilde{W}^i L$ is a $\mathbb{P}$-martingale.

By Itô's product rule, 
\begin{align*}
    \tilde{W}^i_t L_t &= \tilde{W}^i_s L_s + \int_s^t \tilde{W}^i_{v^-} dL_v + \int_s^t L_{v^-} d\tilde{W}^i_v + [\tilde{W}^i, L]_t - [\tilde{W}^i, L]_s \\
    &= \tilde{W}^i_s L_s + \int_s^t \tilde{W}^i_{v^-} \left(L_v \sum_{j=1}^\ell \delta_v^j dW_v^j + L_{v^-} \sum_{j=1}^k \phi_v^j (dN_v^j - \lambda_v^j dv)\right) \\
    &\quad + \int_s^t L_v (dW_v^i - \delta_v^i dv) + \int_s^t L_v \delta_v^i dv \\
    &= \tilde{W}^i_s L_s + \int_s^t \left(L_v + \tilde{W}^i_{v^-} L_v \delta_v^i \right) dW_v^i + \int_s^t \tilde{W}^i_{v^-} L_v \sum_{j \neq i}^\ell \delta_v^j dW_v^j \\
    &\quad + \int_s^t \tilde{W}^i_{v^-} L_{v^-} \sum_{j=1}^k \phi_v^j (dN_v^j - \lambda_v^j dv).
\end{align*}
Since the integrals on the right-hand side are taken with respect to $\mathbb{P}$-martingales (the Brownian motions $W$ and the compensated Poisson processes), taking the conditional expectation yields $\mathbb{E}^\mathbb{P}[\tilde{W}^i_t L_t|\mathcal{F}_s] = \tilde{W}^i_s L_s$.

\textit{Proof of iv.} Fix $t \in [0,T]$ and $i \ne j \in \{1, \dots, k\}$. We claim $\mathbb{Q}(\Delta N_t^i \Delta N_t^j \neq 0) = 0$. By the independence of $N^i$ and $N^j$ under the reference measure $\mathbb{P}$, they have no simultaneous jumps a.s., meaning $\mathbb{P}(\Delta N_t^i \Delta N_t^j \neq 0) = 0$ \cite[Proposition 3.1]{bjork2021point}. Since $\mathbb{Q}$ is absolutely continuous with respect to $\mathbb{P}$ ($\mathbb{Q} \ll \mathbb{P}$), it immediately follows that $\mathbb{Q}(\Delta N_t^i \Delta N_t^j \neq 0) = 0$.
\end{proof}

In our auction model, we consider two market makers, $p$ and $q$, who control their respective bidding prices through deviations from the efficient price (controlled via the drifts $\mu^p, \mu^q$) and actively control their trading intensities ($\lambda^p, \lambda^q$). We denote by $\mathcal{A}$ the set of admissible control pairs $\alpha=(\alpha^p,\alpha^q)$ where $\alpha^i=(\mu^i,\lambda^i)$ for $i \in \{p,q\}$. We assume $\mu^i$ is a $B$-valued $\mathbb{F}$-predictable process and $\lambda^i$ is a $\Lambda$-valued $\mathbb{F}$-predictable process, where $B \subset \mathbb{R}$ and $\Lambda \subset \mathbb{R}_+$ are bounded sets. We denote by $\mathcal{A}^i$ the set of admissible strategies for player $i$. 

Consequently, the pair $(\mu, \lambda)$ satisfies the integrability condition \eqref{novikov} by setting $\delta^i = \frac{\mu^i}{\sigma}$ and $\phi^i = \frac{\lambda^i_t - \lambda_0}{\lambda_0}$. The Radon-Nikodym density process $L^\alpha$ is then the solution to the SDE:
\begin{align*}
    dL_t^{\alpha} = L_t^{\alpha}\sum_{i \in \{p,q\}} \frac{\mu_t^i}{\sigma}dW_t^i + L_{t^-}^{\alpha}\sum_{i \in \{p,q,a,b\}} \frac{\lambda^i_t - \lambda_0}{\lambda_0} (dN_t^i - \lambda_0 dt), \quad L^\alpha_0 = 1.
\end{align*}
This stochastic exponential has the explicit solution \cite[Theorem 11.6.10]{shreve2004stochastic}:
\begin{align*}
    L_t^{\alpha} = \exp \Bigg( &\sum_{i \in \{p,q\}} \int_0^t \left[ \frac{\mu_s^i}{\sigma} dW^i_s - \frac{(\mu_s^i)^2}{2\sigma^2}ds \right] \\
    &+ \sum_{i \in \{a,b,p,q\}} \int_0^t \left[ \ln\left(\frac{\lambda^i_s}{\lambda_0}\right) dN_s^i + (\lambda_0 - \lambda^i_s)ds \right] \Bigg), \quad t \in [0,T].
\end{align*}
Because the controls $\lambda$ and $\mu$ are bounded, $L^\alpha$ satisfies standard integrability criteria and admits moments of all orders.

\section{Proof of Lemma \ref{lemma:trader}}\label{app:bestreact} 
The proof is divided into four main steps. Step (a) shows that $g^p(X_t)$ is $\mathbb{P}^\alpha$-integrable. Step (b) derives the Itô decomposition of the dynamic value function, proving $\mathcal{C}^{R_0^p}_p = \Xi^p$. Step (c) characterizes the optimal control $\alpha^p = (\hat{\mu}, \hat{\lambda})$, proving the second assertion. Step (d) establishes the uniqueness of the representation of a rebate in $\Xi^p$. 

\textit{Step (a)}: Let $\xi^p \in \mathcal{C}^{R_0^p}_p$, $(\alpha^p,\alpha^{q}) \in \mathcal{A}$, and $d \geq 0$. We define the dynamic value process of Auction Maker $p$ as:
\begin{align*}
    V_t(\xi^p,d,\alpha^{q}; \alpha^p) := \mathbb{E}^{\alpha} \left[ g^p(X_T) - g^p(X_t) + \xi^p - \int_t^T \frac{(\lambda^p_s - \lambda_0 )^2}{2} ds \;\middle|\; \mathcal{F}_t \right].
\end{align*} 
Notice that $V_T = \xi^p$ and $V_0 = \mathbb{E}^{\alpha} \big[ g^p(X_T) + \xi^p - \int_0^T \frac{(\lambda^p_s - \lambda_0 )^2}{2} ds \big]$ since $g^p(X_0)=0$. To show that $V_t$ is well-defined, note that $\xi^p \in L^1(\mathbb{P}^{\alpha})$ and $\lambda^p$ is bounded. Thus, it suffices to show that $g^p(X_t) \in L^1(\mathbb{P}^{\alpha})$ for every $t \in [0,T]$. 

Let $\tilde{W}^{p}_t = W^{p}_t - \int_0^t \frac{\mu^p_s}{\sigma} ds$. Under $\mathbb{P}^{\alpha}$, $\tilde{W}^{p}$ is a standard Brownian motion. Define $W^{p}_{\max} := \max_{t \in [0,T]} \tilde{W}^{p}_t$ and $W^{p}_{\min} := \min_{t \in [0,T]} \tilde{W}^{p}_t$. By the reflection principle of Brownian motion, $W^{p}_{\max}$ has the same distribution as $|\tilde{W}^{p}_T|$, and $W^{p}_{\min}$ has the same distribution as $-W^{p}_{\max}$. Additionally, $\big|\int_0^t K^p P^p_s dN^p_s\big| \leq K^p N^p_t \check{P}_t$, where $\check{P}_t := |P_0^*| + t \mu_{\infty} + \sigma W^{p}_{\max} - \sigma W^{p}_{\min}$. We define $W^{q}_{\max}$ and $W^{q}_{\min}$ similarly for Auction Maker $q$. For every $t \in [0,T]$, by the triangle and Cauchy-Schwarz inequalities:
\begin{align*}
    \mathbb{E}^{\alpha} \big[|g^p(X_t)|\big] &= \mathbb{E}^{\alpha} \left[ \Big|K^p(P^{cl}_t - P^{*}_t) \Big(P^{cl}_t N^p_t - \int_0^t P^p_{s}dN^p_s\Big) \Big| \right] \\
    &\leq \mathbb{E}^{\alpha} \big[K^p|P^{cl}_t|^2 N^p_t\big] + \mathbb{E}^{\alpha} \big[ K^p |P^{cl}_t| N^p_t \check{P}_t \big] \\
    &\quad + \mathbb{E}^{\alpha} \big[ K^p |P^{*}_t P^{cl}_t| N^p_t \big] + \mathbb{E}^{\alpha} \big[K^p |P^{*}_t| N^p_t \check{P}_t\big] \\
    &\leq \mathbb{E}^{\alpha} \big[K^p|P^{cl}_t|^2 N^p_t\big] + \sqrt{\mathbb{E}^{\alpha} \big[(K^p P^{cl}_t N^p_t)^2\big]} \left(\sqrt{\mathbb{E}^{\alpha} \big[(\check{P}_t)^2\big]} + \sqrt{\mathbb{E}^{\alpha} \big[(P_t^*)^2\big]}\right) \\
    &\quad + \sqrt{\mathbb{E}^{\alpha} \big[(K^p N^p_t)^2\big] \mathbb{E}^{\alpha} \big[(P_t^* \check{P}_t)^2 \big]}.
\end{align*}
Since the arrival intensities $\lambda_t^a, \lambda_t^b, \lambda^p_t, \lambda_t^{q}$ are uniformly bounded by their respective maxima, the second moments of the counting processes $N_t^a, N_t^b, N^p_t, N^q_t$ are uniformly bounded in $t$. Furthermore, the moments of the running maxima of Brownian motion are finite (e.g., $\mathbb{E}^{\alpha} [(W^{p}_{\max})^2] = \sigma^2 T$). Expanding $(P^{cl}_t N^p_t)^2$ using the explicit definition of the clearing price \eqref{eq:pclearing} reveals it is bounded by a polynomial of bounded random variables. We conclude that $\mathbb{E}^{\alpha} [|g^p(X_t)|] < \infty$.

\textit{Step (b)}: By Itô's formula for semimartingales with jumps (and noting $\frac{\partial^2 g^p}{\partial x_3^2} = 0$ since $g^p$ is linear in the reference price $X^3$), we have:
\begin{align*}
    g^p(X_T) &= \int_0^T \frac{\partial g^p}{\partial x_3}(X_{s^-}) \sigma dW_s + \int_0^T \big[ g^p(X_{s^-} + r_s^1\mathbf{e}_1) - g^p(X_{s^-}) \big] dN^{a}_s \\
    &\quad + \int_0^T \big[ g^p(X_{s^-} + r_s^2\mathbf{e}_2) - g^p(X_{s^-}) \big] dN^{b}_s + \int_0^T \big[ g^p(X_{s^-} + P^p_s\mathbf{e}_4 + \mathbf{e}_5) - g^p(X_{s^-}) \big] dN^p_s \\
    &\quad + \int_0^T \big[ g^p(X_{s^-} + P^q_s\mathbf{e}_6 + \mathbf{e}_7) - g^p(X_{s^-}) \big] dN^{q}_s.
\end{align*}
Define the martingale:
\begin{equation}
    M_t(\xi^p,\alpha^{q}, \alpha^p,d) := V_t(\xi^p,\alpha^{q}, \alpha^p,d) + g^p(X_t) - \int_0^t \frac{(\lambda^p_s - \lambda_0 )^2}{2} ds.
    \label{eq:Mdef}
\end{equation}
By the Martingale Representation Theorem \cite[Lemma A.1]{euch2021optimal}, there exists a predictable process $\tilde{Z} = (\tilde{Z}^1, \dots, \tilde{Z}^7)$ such that \begin{align*}
    M_t = M_0 &+ \int_0^t \tilde{Z}^1_s (dN^a_s - \lambda^a_s ds) + \int_0^t \tilde{Z}^2_s (dN^b_s - \lambda^b_s ds) + \int_0^t \tilde{Z}^3_s dW_s \\
    &+ \int_0^t \tilde{Z}^4_s (dN^p_s - \lambda^p_s ds) + \int_0^t \tilde{Z}^5_s \left(dW^p_s - \frac{\mu^p_s}{\sigma} ds\right) \\
    &+ \int_0^t \tilde{Z}^6_s (dN^q_s - \lambda^q_s ds) + \int_0^t \tilde{Z}^7_s \left(dW^q_s - \frac{\mu^q_s}{\sigma} ds\right).
\end{align*} 
Equating $M_T$ with its terminal value and substituting the Itô expansion of $g^p(X_T)$, we can isolate $\xi^p$. By identifying the coefficients 
we recover the exact representation $\xi^p = Y_T^{p, Y_0^p, Z^p, d}$ driven by the generator $F^p$.

\textit{Step (c)}: We fix $\xi^p \in \Xi^p$, $\alpha^{q} \in \mathcal{A}^q$, and $d \geq 0$. Define the feedback controls $\hat{\mu}^p_s$ and $\hat{\lambda}^p_s$ as the maximizers of the generator $F^p$. Because the constraints $B$ and $\Lambda$ are bounded, the maximizers $\hat{\alpha}^p = (\hat{\mu}^p, \hat{\lambda}^p)$ are predictable, bounded, and thus admissible in $\mathcal{A}^p$. 
By setting $Y^{\alpha^p}_t = V_t(\xi^p, \alpha^{q}, \alpha^p, d)$, we get

\begin{align}\label{BSDE:control}
\nonumber Y_t^{\alpha^p}&=\xi^p +\int_t^T F^p(X_s,Z_s^p,r_s,P_s,\alpha_s^q,d;\alpha_s^p)ds-\int_t^TZ_s^{1,p}(dN_s^a-\lambda_0ds)\\
\nonumber &-\int_t^T Z_s^{2,p}(dN_s^b-\lambda_0ds)-\int_t^T Z_s^{3,p}dW_s-\int_t^T Z_s^{4,p}(dN_s^p-\lambda_0ds)\\
&-\int_t^T Z_s^{5,p}dW_s^p-\int_t^T Z_s^{6,p}(dN_s^q-\lambda_0 ds)-\int_t^T Z_s^{7,p}dW_s^q.
\end{align}

We define $(\hat Y,\hat Z)$ the solution of the following BSDE

\begin{align*}
\hat {Y}_t&=\xi^p +\int_t^T F^p(X_s,\hat {Z}_s^p,r_s,P_s,\alpha_s^q,d;\hat\alpha_s^p)ds-\int_t^T\hat {Z}_s^{1,p}(dN_s^a-\lambda_0ds)\\
&-\int_t^T \hat {Z}_s^{2,p}(dN_s^b-\lambda_0ds)-\int_t^T \hat {Z}_s^{3,p}dW_s-\int_t^T \hat {Z}_s^{4,p}(dN_s^p-\lambda_0ds)\\
&-\int_t^T \hat {Z}_s^{5,p}dW_s^p-\int_t^T \hat {Z}_s^{6,p}(dN_s^q-\lambda_0 ds)-\int_t^T \hat {Z}_s^{7,p}dW_s^q.
\end{align*}
Note that the existence and uniqueness is satisfied from \cite{papapantoleon2018existence}. From \cite[Theorem 3.25]{papapantoleon2018existence} we deduce that for any control $\alpha^p$

\[Y_t^{\alpha^p}\leq \hat Y_t,\text{ for any }t\in [0,T],\; \mathbb P^0-a.s.\]

Hence, $\hat{\alpha}^p$  maximizes the expected utility of Maker $p$, meaning it is Auction Maker $p$'s optimal response. Conversely, assume that $\alpha^{p,\star}$ is the best response of Auction Maker $p$ solving \eqref{eq:traderproblem} when $\alpha^q$ is fixed. Then, $\hat Y_0=Y_0^{\alpha^{\star,p}}.$ Using a strict comparison theorem, see for example \cite[Theorem 6.1. (b)]{nie2021existence} we deduce that $\alpha^{p,\star}$ is a maximizer of $F^p$. 

\textit{Step (d)}: Suppose there exist two representations $(Y_0, Z)$ and $(\tilde{Y}_0, \tilde{Z})$ for the same terminal contract $\xi^p$.
For any $t \in [0,T]$, the dynamic programming principle implies that $Y_t$ and $\tilde{Y}_t$ both equal the essential supremum of the continuation value given the state at time $t$. Therefore, $Y_t = \tilde{Y}_t$ almost surely for all $t \in [0,T]$. Taking the quadratic variation of their difference implies $Z = \tilde{Z}$ in $\mathbb H^2([0,T])$, proving uniqueness.

\end{document}